\newlength{\ftsize}
\def\RClogo{\raisebox{-0.50pt}{\includegraphics[height=\ftsize]{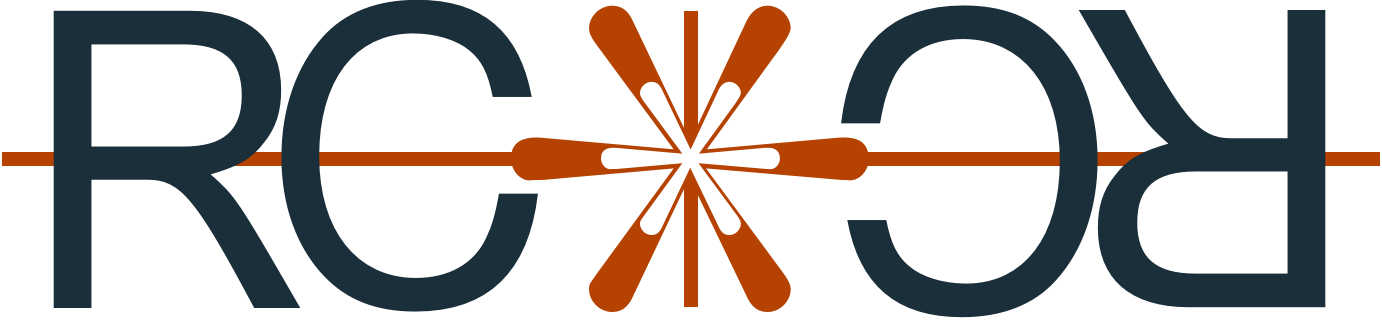}}}
\renewcommand{\vec}[1]{\mathbf{#1}}
\DeclareMathOperator{\tr}{tr}
\DeclareMathOperator{\pf}{pf}
\DeclareMathOperator{\sgn}{sgn}
\newtheorem{proposition}{Proposition}
\begin{document}


\preprintno{%
HU-EP-22/29-RTG\\
\vfill
}

\title{%
First results on QCD+QED with C$^\star$ boundary conditions
}

\collaboration{\RClogo~collaboration}

\author[trin]{Lucius Bushnaq} 
\author[ifca]{Isabel Campos} 
\author[eth]{Marco Catillo} 
\author[hu]{Alessandro Cotellucci}
\author[TVrome,TVinfn]{Madeleine Dale}
\author[trin]{Patrick Fritzsch}
\author[hu,desy]{Jens L\"ucke}
\author[eth]{Marina Krstić Marinković}
\author[hu,desy]{Agostino Patella}
\author[TVrome,TVinfn]{Nazario Tantalo}

\address[trin]{School of Mathematics, Trinity College Dublin, Dublin 2, Ireland}
\address[ifca]{Instituto de F\'isica de Cantabria \& IFCA-CSIC, Avda. de Los Castros s/n, 39005 Santander, Spain}
\address[eth]{Institut f\"ur Theoretische Physik, ETH Z\"urich, Wolfgang-Pauli-Str. 27, 8093 Z\"urich, Switzerland}
\address[hu]{Humboldt Universit\"at zu Berlin, Institut f\"ur Physik \& IRIS Adlershof, \\Zum Grossen Windkanal 6, 12489 Berlin, Germany}
\address[TVrome]{Universit\`a di Roma Tor Vergata, Dipartimento di Fisica, \\Via della Ricerca Scientifica 1, 00133 Rome, Italy}
\address[TVinfn]{INFN, Sezione di Tor Vergata, Via della Ricerca Scientifica 1, 00133 Rome, Italy}
\address[desy]{DESY, Platanenallee 6, D-15738 Zeuthen, Germany}

\begin{abstract}
Accounting for isospin-breaking corrections is critical for achieving
subpercent precision in lattice computations of hadronic observables. A way to
include QED and strong-isospin-breaking corrections in lattice QCD calculations
is to impose C$^\star$ boundary conditions in space. Here, we demonstrate the
computation of a selection of meson and baryon masses on two QCD and five
QCD+QED gauge ensembles in this setup, which preserves locality, gauge and
translational invariance all through the calculation. The generation of the
gauge ensembles is performed for two volumes, and three different values of the
renormalized fine-structure constant at the U-symmetric point, corresponding to
the SU(3)-symmetric QCD in the two ensembles where the electromagnetic coupling is
turned off. We also present our tuning strategy and, to the extent possible, a
cost analysis of the simulations with C$^\star$ boundary conditions. 

\end{abstract}

\begin{keyword}
Lattice QCD and QED \sep High Performance Computing  %
\PACS{%
11.15.-q\sep 
11.15.Ha\sep 
12.20.-m\sep 
12.38.Gc\sep 
12.38.-t\sep 
02.70.-c\sep 
02.70.Uu     
}
\end{keyword}

\maketitle
\makeatletter
\g@addto@macro\bfseries{\boldmath}
\makeatother

\tableofcontents
\pagebreak

\section{Introduction}\label{sec:intro}

At the subpercent level of accuracy the hadronic universe is described by
QCD$+$QED. So-called gold-plated hadronic observables, such as meson masses,
leptonic and semileptonic decay rates of light pseudoscalar mesons and leading
hadronic corrections to the muon $g-2$, are calculated in QCD by means of
lattice simulations with a subpercent error (see~\cite{Aoki:2021kgd} for a
recent review). In order to push the frontier of the precision tests of the
hadronic sector of the Standard Model to the subpercent level of accuracy, it is
necessary to perform first-principles lattice QCD$+$QED calculations. By now,
the necessity of including QED and strong-isospin-breaking corrections in
nonperturbative lattice QCD calculations is fully recognized by the lattice
community. In practice, this is done by following different approaches and
strategies, whose review is beyond the scope of this
paper.\footnote{See~\cite{Patella:2017fgk} for a critical, yet slightly
outdated, review of this subject. Besides the references
in~\cite{Patella:2017fgk}, we acknowledge later progress in the
massive-photon-QED approach~\cite{Clark:2022wjy}, and in the infinite-volume-QED
approach~\cite{Blum:2017cer,Feng:2018qpx,Christ:2020hwe}.}
In our long-term research program, aiming at a precise calculation of the
aforementioned gold-plated quantities, we use an approach fully consistent with
the basic principles of quantum field theory. QCD$+$QED is defined in finite
volume by imposing C${}^\star$ boundary conditions along the spatial directions,
which allows to preserve locality, gauge invariance and translational invariance
at all stages of the calculation~\cite{Lucini:2015hfa}. We also choose to
simulate QCD+QED nonperturbatively, even though our setup can be consistently
used also with a perturbative expansion in the electromagnetic coupling
constant \textit{\`a la} RM123~\cite{deDivitiis:2013xla}.

In this paper we present a status update of our project. The main results are
the masses of the $K^{\pm}$, $K^0$, $\pi^\pm$, $D^{\pm}$, $D^0$ and $D_s^\pm$
mesons, the $\Omega^-$ baryon and the octet baryons, calculated on seven
gauge ensembles. The simulations were performed using the \texttt{openQ*D}
code~\cite{openQxD-csic} at fixed value of the bare strong coupling, three
different values of the renormalized fine-structure constant ($\alpha_R \simeq
0, 1/137, 0.04$), two different volumes ($L \simeq 1.6 \text{ fm}, 2.4 \text{
fm}$), and heavier-than-physical light-quark masses corresponding to the
$U$-spin symmetric point ($m_s=m_d$).

In full QCD+QED simulations one expects that, except for particularly
well-behaved observables, the effect of isospin-breaking corrections may be too
small to be resolved against the statistical error. Following
\cite{Borsanyi:2014jba, Horsley:2015eaa, Horsley:2015vla, CSSM:2019jmq}, we
simulate at several values of the fine-structure constant $\alpha_R$. In the
long run, we want to reduce the error on observables by fitting the functional
dependence on $\alpha_R$. We define a renormalization (or matching) scheme which
allows us to compare QCD+QED at different values of the fine-structure constant.
In practice our matching procedure requires the tuning of the bare quark masses
in such a way that certain hadronic observables get a prescribed value. Our
tuning strategy, designed to keep the cost under control, is described in detail
in this paper. 

One peculiarity of C${}^\star$ boundary conditions is that the integration of a
dynamical quark field yields the pfaffian of a matrix which is trivially related
to the Dirac operator. The fermionic pfaffian is real but not necessarily
positive. The absolute value of the pfaffian is included in the simulated
probability distribution of the gauge configurations, while its sign is included
in the observables as a reweighting factor. It is important to stress that this
sign problem is mild, in the sense that the probability to find a negative sign
goes to zero in the continuum limit. Nevertheless, a fully local approach to
QCD$+$QED requires the evaluation of the sign of the fermionic pfaffian, which
we have done systematically for all our configurations. The algorithm used to
calculate the sign is based on standard methods which follow the flow of
eigenvalues of the hermitian Dirac operator with the quark mass. However, it is
worth noticing that our variant is significantly less expensive than similar
algorithms commonly used in the lattice community (e.g.~\cite{Mohler:2020txx}).
A first version of this algorithm was already presented in~\cite{RC:2021tah}, an
even more optimized version is presented in this paper.

We present also a rather detailed investigation of the computational cost for
the generation of our QCD$+$QED configurations with C${}^\star$ boundary
conditions. This is an important piece of information for practitioners who want
to choose the most efficient method to calculate observables in QCD+QED.
However, one should use this piece of information with a grain of salt. For
instance, the generation of QCD+QED configurations is more expensive than QCD
ones, but a fair comparison e.g. between full simulations and RM123 method
should take into account that observables in the latter method (including all
disconnected contributions) are generally more expensive. Similarly, simulations
with C${}^\star$ boundary conditions tend to be more expensive than simulations
with periodic boundary conditions, but a fair comparison e.g. between QCD+QED
with C${}^\star$ boundary conditions and QCD+QED${}_\mathrm{L}$
\cite{Hayakawa:2008an} should take into account that the former has generally
smaller non-universal finite-volume effects than the latter. A fair comparison
between QCD+QED with C${}^\star$ boundary conditions and QCD and infinite-volume
QED should take into account that the latter has much smaller finite-volume
effects than the former, but also a more complicated behavior towards the
continuum limit.\footnote{This is a common feature of integrals of QCD $n$-point
functions with infinite-volume kernels, see e.g. discussion on enhanced lattice
artifacts in \cite{Ce:2021xgd} and Sommer's talk at
Lattice22~\cite{sommer:lattice22}.} Besides efficiency considerations, from the
theoretical point of view it is extremely important to prove that hadron masses
can be computed in a theory with long-range interactions in full compliance with
the basic principles of quantum field theory, whatever it takes in terms of
computational cost.

The paper is organized as follows. The hadronic renormalization scheme used in
this study is presented in section~\ref{sec:lines}, together with the target
values of the parameters in the continuum theory. Section~\ref{sec:overview}
provides an overview of numerical setup and results, and is organized in various
subsections. The used lattice action, the generated gauge ensembles with
corresponding parameters and diagnostic observables are presented in
subsections~\ref{sec:overview:action} and~\ref{sec:overview:ensembles}; meson
and baryon masses and mass-differences, calculated on the original and
mass-reweighted ensembles, are presented in
subsection~\ref{sec:overview:masses}; finite-volume effects are discussed in
subsection~\ref{sec:overview:fv}; a cost analysis for the generation of our
ensembles is presented in subsection~\ref{sec:overview:cost}. Technical details,
which are not essential to the presentation of the general results but are
important to guarantee reproducibility, are postponed to
section~\ref{sec:details} and its subsections. The definition of the used
gradient-flow observables is provided in subsection~\ref{sec:details:flow}; the
definition of the interpolating operators for mesons and baryons, the strategy
used to extract effective masses and identify mass plateaux are discussed in
subsections~\ref{sec:details:meson} and~\ref{sec:details:baryon}; a detailed
presentation of our tuning strategy is given in
subsection~\ref{sec:details:tuning}; the methods used for the statistical
analysis and in particular for the calculation of the autocorrelation times are
outlined in subsection~\ref{sec:details:analysis}; the algorithm used to
calculate the sign of the pfaffian is discussed in
subsection~\ref{sec:details:sign}; the algorithmic parameters are presented in
subsection~\ref{sec:details:algo}.


\section{Parametrization of four-flavour QCD+QED}\label{sec:lines}

Continuum four-flavour QCD+QED is a class of theories uniquely defined by six
parameters.\footnote{%
Strictly speaking, the continuum limit of QCD+QED does not exist because of the
triviality of QED. Nevertheless, the continuum limit exists and is universal at
every fixed order in the fine-structure constant in the perturbative regime of
QED, which is the relevant one at typical hadronic energies.
} The particular choice of these parameters is largely arbitrary, and different
choices are often referred to as different \textit{renormalization schemes} or
simply as different \textit{schemes}. For instance, a renormalization scheme
which makes sense in perturbation theory is defined by the
$\Lambda_{\text{QCD}}$ in the $\overline{\text{MS}}$ scheme, the renormalized
fine-structure constant at zero energy (which is scheme independent) and the
four renormalization-group-invariant quark masses. Here we use a nonperturbative
scheme defined by the standard gradient-flow scale $(8t_0)^{1/2}$, the
gradient-flow fine-structure constant $\alpha_R$ at energy $t_0$, and the
following dimensionless observables
\begin{gather}
   \phi_0 = 8 t_0 ( M^2_{K^\pm} - M^2_{\pi^\pm} )
   \ , \nonumber \\
   \phi_1 = 8 t_0 ( M^2_{\pi^\pm} + M^2_{K^\pm} + M^2_{K^0} )
   \ , \nonumber \\
   \phi_2 = 8 t_0 ( M^2_{K^0} - M^2_{K^\pm} ) \alpha_R^{-1}
   \ , \nonumber \\
   \phi_3 = \sqrt{8 t_0} ( M_{D_s^\pm} + M_{D^0}+ M_{D^\pm} )
   \ .
   \label{eq:phi-definitions}
\end{gather}
In the above formulae $M_X$ is the mass of the meson $X$. Details on the
definition of these observables will be given in the subsequent sections. The
convenience of this scheme relies on the fact that all involved observables can
be calculated with very good precision and accuracy on the lattice. However,
this scheme cannot be used directly to find the \textit{physical point}, i.e.
the point in parameter space which describes the real hadronic universe at the
level of subpercent precision, since the scale $t_0$ can not be obtained from
experimental data.\footnote{%
In fact, our $\alpha_R$ cannot be obtained from experimental data either.
However, this is less relevant here because we are truly interested in matching
to the real hadronic universe only up to errors of order $\alpha_R^2$. At this
level of precision $\alpha_R$ is scheme-independent and can be matched to the
PDG value.
} Eventually, the \textit{matching} to the real hadronic universe needs to be
done by going to a hadronic scheme, e.g. by replacing $t_0$ with the mass of the
$\Omega^-$ baryon.

Even though at this stage we are not able to locate the physical point at the
subpercent precision level, we can get close to it by using the value of $t_0$
calculated by various collaborations via QCD simulations. For instance, using
the central value of the CLS determination~\cite{Bruno:2016plf} of
$(8t_0)^{1/2}$, the meson masses and the fine-structure constant from the
PDG~\cite{ParticleDataGroup:2020ssz}, we obtain for the physical point:
\begin{gather}
   \label{eq:physical-point}
   (8t_0^\text{phys})^{1/2} \simeq 0.415 \text{ fm}
   \ , \quad
   \alpha_R^\text{phys} \simeq 1/137 \simeq 0.007299
   \ , \\ \nonumber
   \phi_0^\text{phys} \simeq 0.992
   \ , \quad
   \phi_1^\text{phys} \simeq 2.26
   \ , \quad
   \phi_2^\text{phys} \simeq 2.36
   \ , \quad
   \phi_3^\text{phys} \simeq 12.0
   \ .
\end{gather}
As a side remark, the $\phi$ observables have been designed to be maximally
sensitive to certain combinations of the quark masses. In fact, at the leading
order in SU(3) chiral perturbation theory coupled to QED, one easily shows
\cite{Neufeld:1995mu,Bar:2013ora} that
\begin{subequations}
\begin{gather}
   \phi_0 = A (m_{s,R} - m_{d,R})
   \ , \\
   \phi_1 = 2 A (m_{u,R} + m_{d,R} + m_{s,R}) + 2 B \alpha_R
   \ , \\
   \phi_2 = A \alpha_R^{-1} (m_{d,R} - m_{u,R}) - B
   \ ,
\end{gather}
\end{subequations}
where $A$ and $B$ are some low-energy constants, and $m_{\star,R}$ are the
renormalized quark masses. The observable $\phi_0$ is proportional to the
strange/down mass difference. The observable $\phi_1$ has been already used in
other contexts, e.g.~\cite{Bietenholz:2010jr, Bruno:2014jqa}, and it fixes the
average of the light-quark masses as long as $\alpha_R$ is constant. The
$\phi_2$ observable fixes the ratio between strong and electromagnetic
isospin-breaking effects. Finally, the observable $\phi_3$ is used essentially
to fix the charm quark mass, and has been already used e.g. in
\cite{Hollwieser:2020qri}.

In this paper we present simulations far away from the physical point. There are
two main reasons to consider unphysical values of the parameters.
\begin{enumerate}
   
   \item We expect that at the physical value of $\alpha_R$ we will not be able
   to have enough statistical precision (except for a handful of observables) to
   resolve isospin-breaking effects. Following \cite{Borsanyi:2014jba,
   Horsley:2015eaa, Horsley:2015vla, CSSM:2019jmq}, we simulate at several
   values of the fine-structure constant $\alpha_R$, including $\alpha_R=0$.
   Isospin-breaking effects at the physical value of $\alpha_R$ can be extracted
   by interpolation, while keeping the statistical error under control. 
   
   \item We simulate up and down quarks that are heavier than the physical ones
   in order to make simulations less expensive (while the strange quark is
   lighter than physical). This makes sense especially considering the
   exploratory character of the presented calculation: our current priority is
   to investigate the stability of the chosen simulation setup and to develop
   tuning strategies. The physical value of the quark masses will be approached
   in future studies.

\end{enumerate}
The simulations presented in this paper are performed close to the unphysical
line defined by fixing
\begin{gather}
   (8t_0)^{1/2} = 0.415 \text{ fm}
   \ , \quad
   \phi_0 = 0 
   \ , \quad
   \phi_1 = 2.11
   \ , \quad
   \phi_2 = 2.36
   \ , \quad
   \phi_3 = 12.1
   \ ,
   \label{eq:Uline}
\end{gather}
while $\alpha_R$ is varied from $0$ to $0.04$. Notice that the condition $\phi_0
= 0$ is equivalent to $M_{K^\pm} = M_{\pi^\pm}$ and $m_d=m_s$. When $\phi_0 =
0$, QCD+QED enjoys an enlarged SU(2) flavour symmetry which rotates down and
strange quarks into each other, often called \textit{U-spin symmetry}. For this
reason, we will refer to the line in parameter space defined above as the
\textit{U-symmetric line}. Since $\phi_2$ is kept constant while $\alpha_R$ is
varied, in the $\alpha_R \to 0$ limit one must have that $M_{K^0} = M_{K^\pm}$
which also implies that $m_u=m_d$. Therefore the point $\alpha_R=0$ on the
U-symmetric line is nothing but SU(3)-symmetric QCD. At this stage,
choosing $\phi_1$ close to its physical value ensures that the three degenerate
light quarks have mass roughly equal to the average of the three physical
light-quark masses.\footnote{%
Notice that the chosen value of $\phi_1$ and $\phi_3$ are close but not equal to
the ones given in eq.~\eqref{eq:physical-point}. This is because we chose to
match $\phi_1$ and $\phi_3$ to the ones calculated on our gauge configurations
with $\alpha_R=0$, rather than to the physical ones. This is not essential since
at this point the physical values given in eq.~\eqref{eq:physical-point} are
affected by an unspecified error that comes from the fact that we do not know
the exact value of $t_0$.
} The target lines are represented in figure~\ref{fig:tuning}, together with the
meson masses calculated on our best-tuned ensembles presented in this paper. An
overview of our gauge ensembles and of the observables needed for the tuning is
given in sections~\ref{sec:overview:ensembles} and~\ref{sec:overview:masses}.
Technical details on our tuning strategy are discussed in
section~\ref{sec:details:tuning}.

\begin{figure}
   \begin{center}
      \includegraphics[width=.48\textwidth]{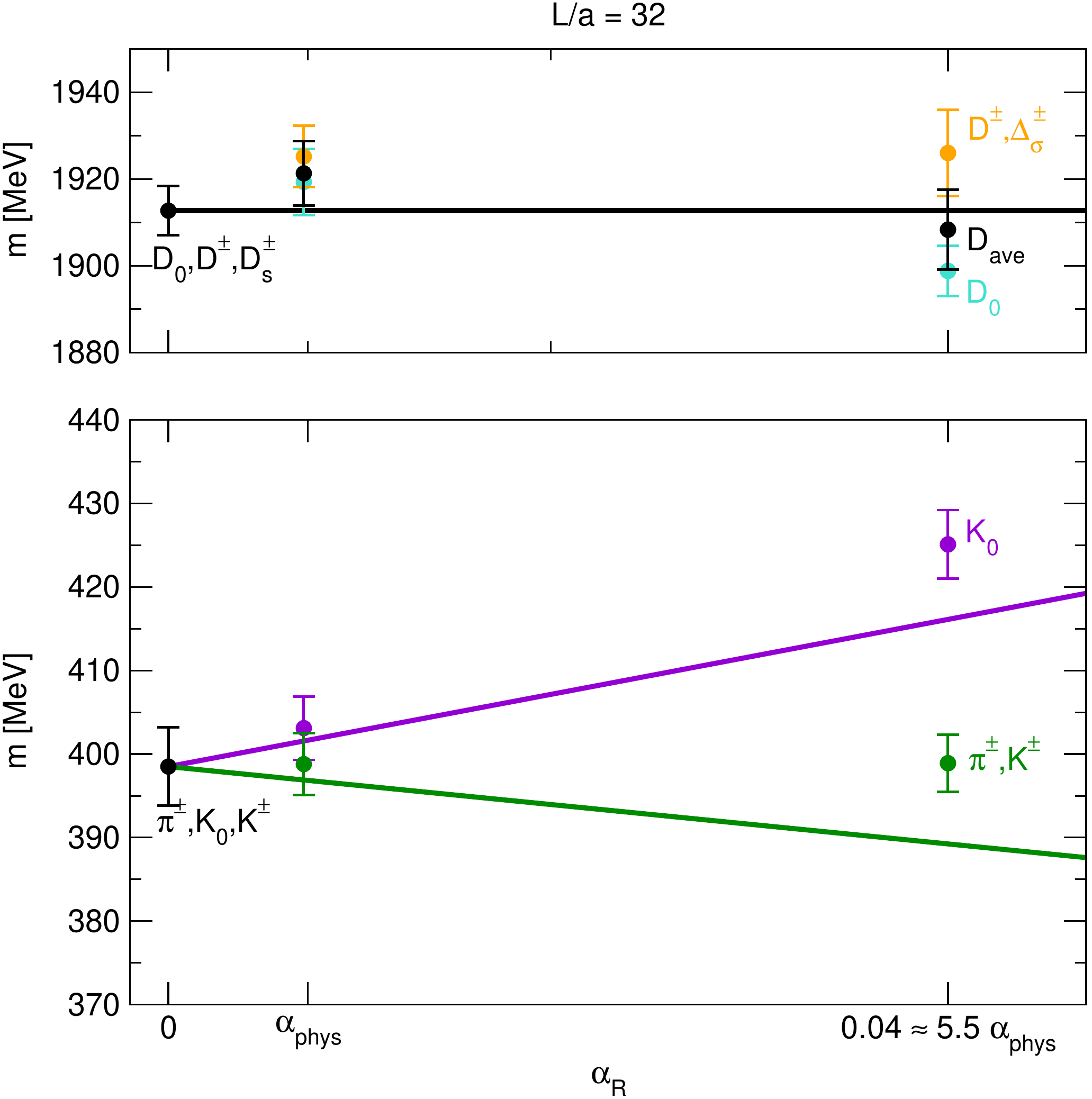}
		\hfill
		\includegraphics[width=.48\textwidth]{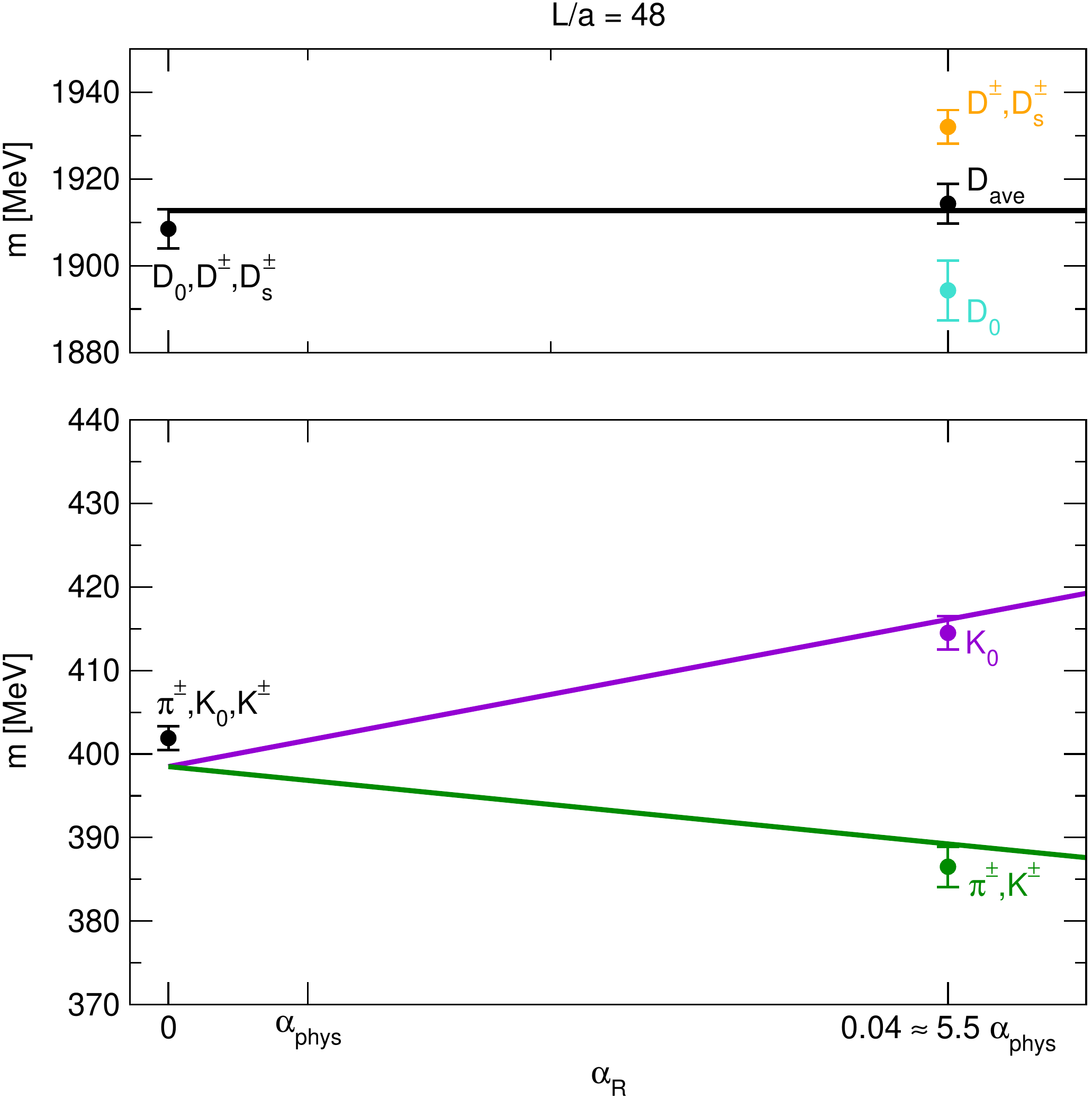}
      \caption{%
		The solid lines are the target values for the $K_0$ mass (purple), the
		$K_\pm$ mass (green), the average $D$ mass $M_{D_{\text{ave}}} =
		\frac{1}{3} ( M_{D_s^\pm} + M_{D_0}+ M_{D^\pm} )$ (black) as functions of
		$\alpha_R$, determined by solving the tuning conditions given in
		eq.~\eqref{eq:Uline}. In the left plots, the data points with error are
		meson masses calculated on the ensemble \texttt{A400a00b324}
		($\alpha_R=0$) and on the reweighted ensembles \texttt{A380a07b324+RW1}
		($\alpha_R \simeq 1/137$) and \texttt{A360a50b324+RW2} ($\alpha_R \simeq
		0.04$). In the right plots, the data points with error are meson masses
		calculated on the ensembles \texttt{B400a00b324} ($\alpha_R=0$) and
		\texttt{C380a50b324} ($\alpha_R \simeq 0.04$). These correspond to our
		best-tuned ensembles. The comparison between left and right plots
		indicates how finite-volume effects affect the tuning. For an overview of
		our gauge ensembles, see section~\ref{sec:overview:ensembles}. The
		calculation of meson masses and $\phi$ observables is presented in
		section~\ref{sec:overview:masses}. Values in MeV are obtained by using the
		reference value $(8 t_0)^{1/2} = 0.415 \text{ fm}$~\cite{Bruno:2016plf}.
      }
      \label{fig:tuning}
   \end{center}
\end{figure}


\section{Overview of numerical results}\label{sec:overview}

\subsection{Lattice action}\label{sec:overview:action}

All configurations have been generated with the \texttt{openQ*D}
code~\cite{openQxD-csic}. For a complete description of actions and algorithms
we refer the reader to~\cite{Campos:2019kgw}, while we provide here only a quick
summary.

All our simulations are performed on a $(T/a) \times (L/a)^3$ lattice with
periodic boundary conditions in time, and C$^\star$ boundary conditions in all
spatial directions. We employ the L\"uscher--Weisz discretization for the SU(3)
gauge action, and the Wilson action with an unconventional normalization for the
U(1) gauge action
\begin{gather}
   S_{\mathrm{g,U(1)}}(z) = \frac{1}{8 \pi q_{el}^2 \alpha} \sum_x \sum_{\mu \neq \nu} \left[ 1 - P_{\mu\nu}^{\mathrm{U(1)}}(x) \right]
   \ , \label{eq:U(1)_gauge_action}
\end{gather}
where $P_{\mu\nu}^{\mathrm{U(1)}}(x)$ is the plaquette in $x$ extending in the
directions $\mu$ and $\nu$, constructed with the compact U(1) field $z(x,\mu)$,
and $\alpha$ is the bare fine-structure constant. In the compact formulation
the electric charge is quantized, and it must be an integer multiple of the
parameter $q_{el}$ which can be chosen arbitrarily. In practice, we set $q_{el}
= 1/6$ which allows us to construct gauge-invariant interpolating operators for
charged hadrons as detailed in section 6 of~\cite{Lucini:2015hfa}. We simulate
at a fixed value $\beta = 3.24$ which corresponds roughly to a lattice spacing
of $a \simeq 0.054 \text{ fm}$, and several values of the bare fine-structure
constant $\alpha$.

We simulate four flavours of $O(a)$-improved Wilson fermions. In particular, we
consider the non-physical case in which the up and down quarks are heavier than
physical, the strange quark is lighter than physical, and the down and strange
quarks are degenerate. In the case of QCD+QED the improved Wilson--Dirac
operator includes two Sheikholeslami--Wohlert (SW) terms: the first one depends
on the SU(3) field tensor with coefficient $c_{\mathrm{sw}}^{\mathrm{SU(3)}}$,
and the second one depends on the U(1) field tensor with coefficient
$c_{\mathrm{sw}}^{\mathrm{U(1)}}$. For our QCD ensembles we use the improvement
coefficient $c_{\mathrm{sw}}^{\mathrm{SU(3)}}$, non-perturbatively determined in
\cite{Fritzsch:2018kjg}. For our QCD+QED ensembles we use the same value of
$c_{\mathrm{sw}}^{\mathrm{SU(3)}}$ which is correct up to $O(\alpha)$ terms, and
$c_{\mathrm{sw}}^{\mathrm{U(1)}}=1$ which corresponds to tree-level improvement.

Like in the case of periodic boundary conditions, individual flavours of Wilson
fermions introduce a mild sign problem. After integrating out the fermions, the
path-integral weight is real but generally non-positive. However, the probability
to find a negative sign vanishes in the continuum limit. In the case of
C$^\star$ boundary conditions the sign of the path-integral weight is determined
by the sign of the fermionic pfaffian, and has been systematically calculated on
our gauge ensembles. Details are given in section~\ref{sec:details:sign}.

\subsection{Gauge ensembles}\label{sec:overview:ensembles}

We have generated 7 ensembles with three different values of $\alpha$. The
ensembles are named with a string that contains a letter in one-to-one
correspondence with the lattice size ($\texttt{A} = 64 \times 32^3$, $\texttt{B}
= 80 \times 48^3$, $\texttt{C} = 96 \times 48^3$), the approximative mass of the
charged pion, the letter $\texttt{a}$ followed by the two digits in
$\alpha=0.0\texttt{xx}\dots$ denoted by $\texttt{x}$, the letter $\texttt{b}$
followed by the value of $100 \times \beta$. The action parameters for all
ensembles are summarized in table~\ref{tab:parameters}, while the number of
generated configurations and a number of diagnostic observables are summarized
in table~\ref{tab:ncnfgs}.

We observe that, among all observables that we have considered, $t_0/a^2$ has
always the largest integrated autocorrelation time. On our $64 \times 32^3$
lattices this turns out to be about 100 MDU. On the larger lattices the
integrated autocorrelation time of $t_0/a^2$ seems to be smaller; however, it is
reasonable to think that we are just underestimating it because of the reduced
statistics. We have also monitored the topological charge, and we observe that
we do not incur topological freezing despite using periodic boundary
conditions in time (see figure~\ref{fig:topo}).

\begin{figure}
   \begin{center}
      \includegraphics[width=.7\textwidth]{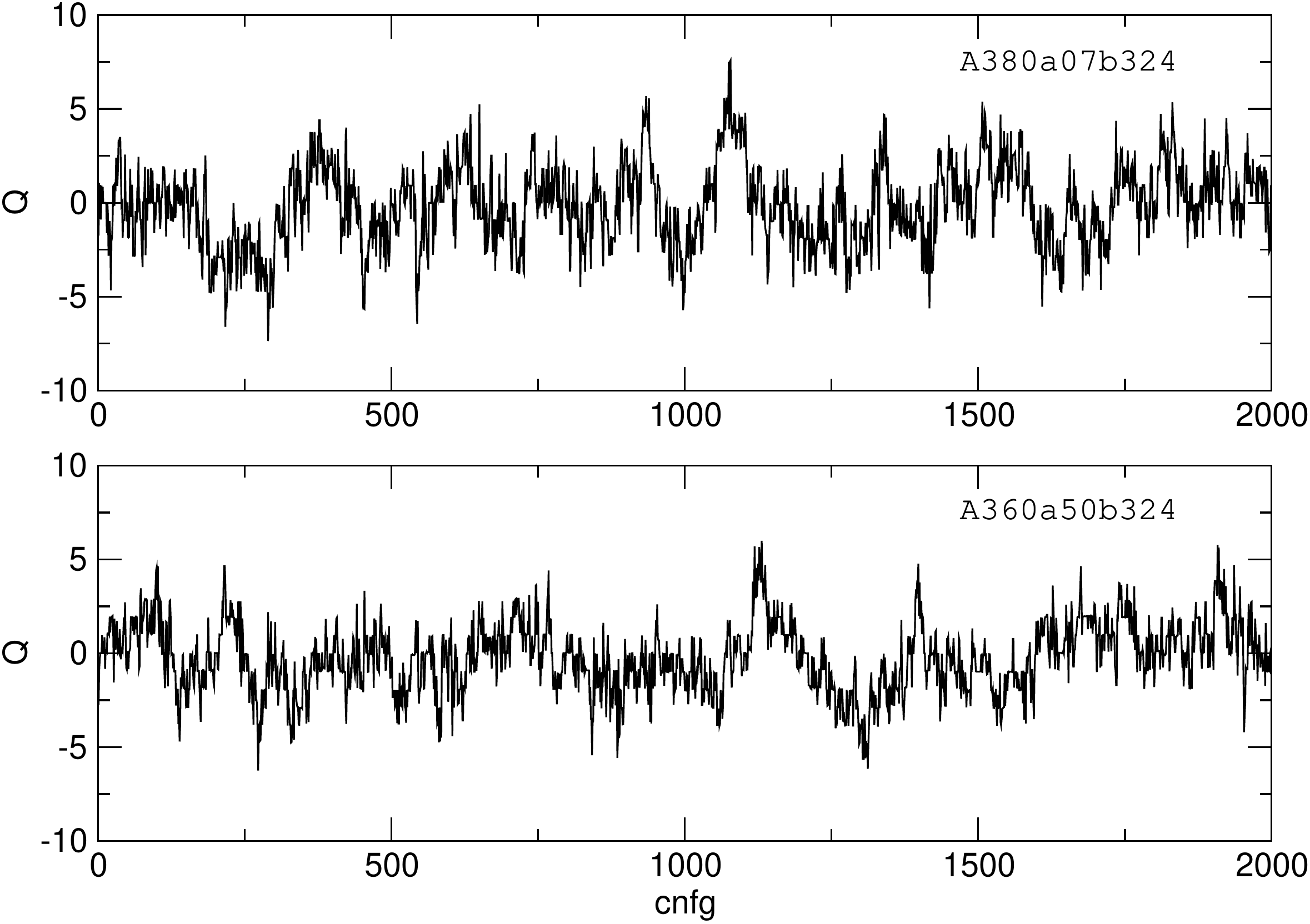}
      \caption{%
      History of the topological charge for two ensembles.
      }
      \label{fig:topo}
   \end{center}
\end{figure}

Compact QED displays a first-order phase transition in bare parameter space
\cite{Azcoiti:1991ng,Campos:1998jp,Arnold:2002jk} which separates a
strong-coupling confining phase and a weak-coupling Coulomb phase. In the pure
gauge theory, the average U(1) plaquette $P^{\mathrm{U(1)}}$ shows a jump across
the phase transition: $P^{\mathrm{U(1)}}$ is small in the confining phase and
close to 1 in the Coulomb phase. Standard weak- and strong-coupling analysis
suggests that the two regimes survive also in presence of fermions. Since we use
a compact action for QED and larger than physical values of $\alpha$, a
legitimate question is whether we are in the Coulomb phase and far enough from
the phase transition. In the conventions of~\cite{Arnold:2002jk}, our largest
value of $\alpha=0.05$ corresponds to
\begin{gather}
   \beta^{\mathrm{U(1)}} = \frac{1}{4\pi q_{el}^2 \alpha} \simeq 57
\end{gather}
which is certainly much larger than the critical value $\beta^{\mathrm{U(1)}}_c
\simeq 1.01$ of the pure gauge theory. The deviation of the average U(1)
plaquette from one is $1 - P^{\mathrm{U(1)}} = 4.19405(21) \times 10^{-3}$ on
the ensemble \texttt{A360a50b324}, which is a clear indication that our
ensembles are always deep in the weak electromagnetic coupling phase.

\begin{table}
   \small
   \begin{center}
      \begin{tabular}{ccccccccc}
         \hline
         ensemble & lattice &
         $\beta$ & $\alpha$ &
         $\kappa_u$ & $\kappa_d=\kappa_s$ & $\kappa_c$ \\
         \hline
         \hline
         \texttt{A400a00b324} & $64 \times 32^3$ & 3.24 & 0 & 0.13440733 & 0.13440733 & 0.12784 \\
         \texttt{B400a00b324} & $80 \times 48^3$ & 3.24 & 0 & 0.13440733 & 0.13440733 & 0.12784 \\
         \hline
         \texttt{A450a07b324} & $64 \times 32^3$ & 3.24 & 0.007299 & 0.13454999 & 0.13441323 & 0.12798662 \\
         \texttt{A380a07b324} & $64 \times 32^3$ & 3.24 & 0.007299 & 0.13459164 & 0.13444333 & 0.12806355 \\
         \hline
         \texttt{A500a50b324} & $64 \times 32^3$ & 3.24 & 0.05 & 0.135479 & 0.134524 & 0.12965 \\
         \texttt{A360a50b324} & $64 \times 32^3$ & 3.24 & 0.05 & 0.135560 & 0.134617 & 0.129583 \\
         \texttt{C380a50b324} & $96 \times 48^3$ & 3.24 & 0.05 & 0.1355368 & 0.134596 & 0.12959326 \\
         \hline
      \end{tabular}
   \end{center}
   \caption{
   Action parameters. All ensembles have C$^\star$ boundary conditions in space
   and periodic boundary conditions in time. The improvement coefficients are
   $c_{\mathrm{sw}}^{\mathrm{SU(3)}} = 2.18859$ and
   $c_{\mathrm{sw}}^{\mathrm{U(1)}} = 1$.
   }\label{tab:parameters}
\end{table}

\begin{table}
   \small
   \begin{center}
      \begin{tabular}{ccccccc}
         \hline
         ensemble & n. cnfg & acc. rate & $\langle e^{- \Delta H} \rangle$ & $\tau_{\mathrm{int}}(t_0)$ & $\tau_{\mathrm{int}}(Q^2)$ & $\tau_{\mathrm{int}}(\alpha_R)$ \\
         \hline
         \hline
         \texttt{A400a00b324} & 2000 & 95\% & 0.9979(55) & 51(18)    & 6.4(2.3)  & --- \\ 
         \texttt{B400a00b324} & 1082 & 98\% & 0.9950(25) & 31(10)    & 8.0(2.8)  & --- \\ 
         \hline
         \texttt{A450a07b324} & 1000 & 94\% & 0.9978(46) & 44(19)    & 6.5(3.0)  & 2.3(1.6) \\ 
         \texttt{A380a07b324} & 2000 & 92\% & 1.0017(46) & 46(15)    & 10.3(3.5) & 2.7(1.5) \\ 
         \hline
         \texttt{A500a50b324} & 1993 & 97\% & 0.9961(21) & 21.4(5.5) & 11.6(2.6) & 1.40(55) \\ 
         \texttt{A360a50b324} & 2001 & 95\% & 0.9956(45) & 47(16)    & 8.5(2.6)  & 1.1(1.0) \\ 
         \texttt{C380a50b324} &  600 & 98\% & 1.004(12)  & 12.5(3.9) & 10.6(4.1) & 3.0(1.2) \\ 
         \hline
      \end{tabular}
   \end{center}
   \caption{
   For each ensemble: the number of configurations which corresponds to the
   number of MD trajectories, the acceptance rate, the diagnostic observable
   $\langle e^{- \Delta H} \rangle = 1$, the integrated autocorrelation times
   (in units of MD trajectories) for the scale $t_0/a^2$, the squared
   topological charge $Q^2$, and the renormalized fine-structure constant
   $\alpha_R$. One MD trajectory is equal to $\tau=2$ MD units.
   }\label{tab:ncnfgs}
\end{table}

\subsection{Tuning and hadron masses}\label{sec:overview:masses}

Once the theory is discretized on the lattice, it depends on six dimensionless
bare parameters: the inverse bare strong coupling $\beta$, the bare
fine-structure constant $\alpha$ and the hopping parameters $\kappa_f$. In our
simulation strategy these six parameters are treated in different ways: we
\textit{choose} the values of $\beta$ and $\alpha$, while we \textit{tune} the
values of $\kappa_{f=u,d,s,c}$. Changing $\beta$ changes the lattice spacing,
and eventually we want to simulate at different values of $\beta$ and then take
the continuum limit by extrapolating to $\beta \to \infty$. Pretty much like in
QCD, we do care about choosing $\beta$ in a range such that the lattice spacing
is as fine as we can afford, but we do not care about tuning the lattice spacing
to specific values. If we wanted to match the renormalized fine-structure
constant $\alpha_R$ to its physical value, we would need to tune the bare
fine-structure constant $\alpha$. Instead we only want to scan several values of
$\alpha_R$, which means that we can choose the value of $\alpha$ in a reasonable
range and then simply calculate $\alpha_R$. On the other hand, for each chosen
value of $\beta$ and $\alpha$, we want to tune the four hopping parameters
$\kappa_f$ in such a way that the four dimensionless $\phi$ observables match
the chosen values in eq.~\eqref{eq:Uline}. In fact, on the U-symmetric line,
U-spin symmetry fixes $m_d = m_s$ and we truly have to tune only three
parameters. Finally, values of observables in lattice units are converted into
physical units by using the reference value for $t_0$ given in
eq.~\eqref{eq:Uline}.

The tuning of the hopping parameters has been carried out with a combination of
techniques, described in some detail in section~\ref{sec:details:tuning}. In
particular, we have used mass reweighting to explore the space of hopping
parameters in the vicinity of the simulated point. The specific implementation
of the mass reweighting procedure used in this work is described
in~\cite{reweighting-paper}, the peculiarity being that we need to reweight the
determinant of the rational approximation of a generic power of $D^\dag D$. One
can use mass reweighting also to correct for small mistunings, which we have done
to some extent. We label all mass reweighting factors used in this work by the
code $\texttt{RWi}$ where $\texttt{i}$ is an index. In
table~\ref{tab:reweighting} we summarize all mass reweighting factors, together
with the ensembles on which they are calculated and the target quark hopping
parameters.

The calculated values of the lattice spacing $a$ and renormalized fine-structure
constant, with and without mass reweightings, are presented in
table~\ref{tab:a_and_alpha}. Some details on the calculation of these
observables are given in sections~\ref{sec:details:flow}. We have calculated the
masses of the $\pi^\pm$, $K_0$, $K^\pm$, $D_0$, $D^\pm$, $D_s^\pm$ mesons, and
the mass differences for the charged-neutral $K$ mesons and $D$ mesons: results
are presented in table~\ref{tab:masses} and the methods used are described in
section~\ref{sec:details:meson}. It is interesting to notice that, at the tuned
points, we are able to distinguish clearly the $K_0/K^\pm$ mass difference from
zero even at the physical value of $\alpha_R$, while the signal is somewhat less
clear for the $D_0/D^\pm$ mass difference. The $\phi$ observables, which are
used for tuning, are presented in table~\ref{tab:phis}. Notice that, at the
tuned points, we were able to determine $\phi_1$ with a relative statistical error of
about 3\%, $\phi_2$ with a relative statistical error in the range 5-10\%, and $\phi_3$
with a relative statistical error of 0.5\%. Unsurprisingly, since $\phi_2$ is
proportional to the isospin-breaking corrections, it is the hardest to get,
especially at smaller values of $\alpha_R$. The meson masses calculated at the
tuned points are plotted in figure~\ref{fig:tuning}.

We have also calculated the masses of the octet baryons and the $\Omega^-$
baryon in all our small-volume QCD+QED ensembles, together with various baryon
mass differences: the results are presented in tables~\ref{tab:bar_masses}
and~\ref{tab:bar_massdiffs}. A description of the methods used in this
calculation, together with the plots of a selection of effective masses, can be
found in section~\ref{sec:details:baryon}. Here we notice that we obtain baryon
masses with a statistical error in the range 1-5\%, while the statistical error
on the baryon mass differences is less uniform. As expected, the statistical
error on the baryon masses is generally higher for ensembles with heavier pions.
The baryon masses measured on our ensemble \texttt{A360a50b324+RW2} ($\alpha_R
\simeq 0.04$) have significantly larger errors than the other ensembles. This
can be due to a combination of factors: lighter pions, smaller number of
stochastic sources, and perhaps larger effect of the reweighting factor. We plan
to investigate this issue in the future. We have not attempted a systematic
study of excited state contaminations (which is milder for heavier-than-physical
pions) and we do not attempt an estimate of the systematic error due to a
misidentification of the plateau region in our effective masses, but we have
checked the stability of our results against the inclusion of interpolating
operators with different levels of smearing in a generalized eigenvalue problem.

As discussed in~\cite{Lucini:2015hfa}, C$^\star$ boundary conditions partially
break flavour symmetries producing some unphysical mixings, which are pure
finite-volume effects and vanish exponentially fast in the infinite-volume limit
even in QCD+QED. Of the considered baryons, the $\Xi^-$ mixes with the proton,
the $\Xi^0$ mixes with the neutron, and the $\Omega^-$ mixes with the
$\Sigma^{*+}$. In practice, these mixings are generated by quark-disconnected
Wick contractions in the baryon-baryon two-point functions which are allowed
because of the boundary conditions. In our calculation we neglect these
disconnected contributions, which means that we truly consider
partially-quenched baryons made of auxiliary valence quarks for which the mixing
is forbidden. For instance our $\Omega^-$ is truly an $ss's''$ baryon where $s'$
and $s''$ are valence quarks with the same mass and charge as the strange quark
$s$. The two-point functions of these partially-quenched baryons differ from the
two-point functions of the unitary baryons by exponentially suppressed
finite-volume effects.

\begin{table}
   \small
   \begin{center}
      \begin{tabular}{ccccc}
         \hline
         reweighting & ensemble & $\kappa_u$ & $\kappa_d=\kappa_s$ & $\kappa_c$ \\
         \hline
         \hline
         \texttt{RW1} & \texttt{A380a07b324} & 0.13457969 & 0.13443525 & 0.12806355\\ 
         \hline
         \texttt{RW2} & \texttt{A360a50b324} & 0.13553680 & 0.1345960 & 0.12959326 \\ 
         \hline
      \end{tabular}
   \end{center}
   \caption{
   For each mass reweighting factor: the ensemble on which it is calculated and
   the target values of the hopping parameters.
   }\label{tab:reweighting}
\end{table}

\begin{table}
   \small
   \begin{center}
      \begin{tabular}{lcccc}
         \hline
         ensemble(+rw) & $t_0/a^2$ &
         $a$ [fm] & $\alpha_R$ &
         $\pi\sqrt{3}L^{-1}$ [MeV] \\
         \hline
         \hline
         \texttt{A400a00b324}     & 7.402(66) & 0.05393(24)  & 0            & ---        \\
         \texttt{B400a00b324}     & 7.383(40) & 0.05400(14)  & 0            & ---        \\
         \hline           
         \texttt{A450a07b324}     & 7.198(84) & 0.05469(32)  & 0.007076(24) & 613.5(3.6) \\
         \texttt{A380a07b324}     & 7.599(79) & 0.05323(28)  & 0.007081(19) & 630.4(3.3) \\
         \texttt{A380a07b324+RW1} & 7.525(77) & 0.05349(27)  & 0.007080(22) & 627.3(3.2) \\
         \hline           
         \texttt{A500a50b324}     & 7.789(42) & 0.05257(14)  & 0.040772(85) & 638.2(1.7) \\
         \texttt{A360a50b324}     & 8.427(89) & 0.05054(27)  & 0.040633(80) & 663.9(3.5) \\
         \texttt{A360a50b324+RW2} & 8.285(79) & 0.05098(24)  & 0.04069(26)  & 658.2(3.2) \\
         \texttt{C380a50b324}     & 8.400(26) & 0.050625(79) & 0.04073(11)  & 441.86(69) \\
         \hline
      \end{tabular}
   \end{center}
   \caption{
   For each ensemble (possibly with mass reweighting): reference observable
   $t_0/a^2$, lattice spacing $a$ calculated from the measured value of
   $t_0/a^2$, the renormalized fine-structure constant $\alpha_R$, the
   tree-level energy gap of the photon $\pi\sqrt{3}L^{-1}$. Values in physical
   units are obtained by using the reference value $(8 t_0)^{1/2} = 0.415 \text{
   fm}$~\cite{Bruno:2016plf}.
   }
   \label{tab:a_and_alpha}
\end{table}

\begin{table}
   \small
   \begin{center}
		\addtolength{\leftskip} {-1cm}
		\addtolength{\rightskip}{-1cm}
      \begin{tabular}{lcccccc}
         \hline
         ensemble(+rw) &
         $M_{\pi^\pm}=M_{K^\pm}$ & $M_{K_0}$ & $M_{K_0}-M_{K^\pm}$ &
         $M_{D^\pm}=M_{D_s^\pm}$ & $M_{D_0}$ & $M_{D^\pm}-M_{D^0}$ \\
         & [MeV] & [MeV] & [MeV] & [MeV] & [MeV] & [MeV] \\
         \hline
         \hline
         \texttt{A400a00b324}     & 398.5(4.7) & 398.5(4.7) & 0        & 1912.7(5.7) & 1912.7(5.7) & 0         \\
         \texttt{B400a00b324}     & 401.9(1.4) & 401.9(1.4) & 0        & 1908.5(4.5) & 1908.5(4.5) & 0         \\
         \hline                                                                                                
         \texttt{A450a07b324}     & 451.2(4.3) & 451.6(4.7) & 0.8(1.1) & 1919.8(7.3) & 1916.0(8.0) & 3.6(1.2)  \\
         \texttt{A380a07b324}     & 383.6(4.4) & 390.7(3.7) & 7.01(26) & 1926.4(7.8) & 1921.1(7.6) & 5.03(46)  \\
         \texttt{A380a07b324+RW1} & 398.8(3.7) & 403.1(3.8) & 4.26(31) & 1925.2(7.1) & 1919.3(7.6) & 5.8(1.1)  \\
         \hline                                                                                                
         \texttt{A500a50b324}     & 495.0(2.8) & 519.1(2.5) & 24.0(1.0)& 1901.1(4.1) & 1870.1(4.4) & 31.6(1.6) \\
         \texttt{A360a50b324}     & 358.6(3.7) & 388.8(3.5) & 29.5(2.4)& 1937.8(6.8) & 1912.0(7.7) & 26.0(2.8) \\
         \texttt{A360a50b324+RW2} & 398.9(3.4) & 425.1(4.1) & 26.1(1.3)& 1926(10)    & 1898.8(5.8) & 26.9(2.2) \\
         \texttt{C380a50b324}     & 386.5(2.4) & 414.5(2.0) & 26.89(49)& 1932.0(3.9) & 1894.3(6.9) & 34.5(5.6) \\
         \hline
      \end{tabular}
   \end{center}
   \caption{%
	For each ensemble (possibly with mass reweighting): meson masses, and
	charged-neutral meson mass differences. Values in MeV are obtained by using
	the reference value $(8 t_0)^{1/2} = 0.415 \text{ fm}$~\cite{Bruno:2016plf}.
	Notice that some mesons are degenerate because in our simulations $m_d=m_s$.
	}
   \label{tab:masses}
\end{table}

\begin{table}
   \small
   \begin{center}
      \begin{tabular}{llll}
         \hline
         ensemble(+rw) & $\phi_1$ &
         $\phi_2$ & $\phi_3$ \\
         \hline
         \hline
         \texttt{A400a00b324}     & 2.107(50) & ---      & 12.068(36) \\
         \texttt{B400a00b324}     & 2.143(15) & ---      & 12.042(28) \\
         \hline
         \texttt{A450a07b324}     & 2.703(53) & 0.44(60) & 12.097(51) \\
         \texttt{A380a07b324}     & 1.977(37) & 3.39(14) & 12.132(48) \\
         \texttt{A380a07b324+RW1} & 2.126(39) & 2.13(17) & 12.122(47) \\
         \hline
         \texttt{A500a50b324}     & 3.357(37) & 2.60(11) & 11.864(28) \\
         \texttt{A360a50b324}     & 1.806(35) & 2.41(19) & 12.114(41) \\
         \texttt{A360a50b324+RW2} & 2.208(38) & 2.348(97) & 12.040(58) \\
         \texttt{C380a50b324}     & 2.088(22) & 2.350(44) & 12.020(29) \\
         \hline
         \hline
         target                   & 2.11      & 2.36     & 12.1       \\
			\hline
      \end{tabular}
   \end{center}
   \caption{%
	$\phi$ parameters for each ensemble (possibly with mass reweighting),
	together with the target value used to define the lines of constant physics.
	}
   \label{tab:phis}
\end{table}

\begin{table}
	\small
   \begin{center}
      \begin{tabular}{lccccc}
         \hline
         ensemble(+rw) &
         $M_p=M_{\Sigma^{+}}$ & $M_n=M_{\Xi^0}$ & $M_{\Xi^{-}}=M_{\Sigma^{-}}$ & $M_{\Lambda^0}$ & $M_{\Omega^{-}} = M_{\Delta^-}$  \\
			& [MeV] & [MeV] & [MeV] & [MeV] & [MeV] \\
         \hline
         \hline
         \texttt{A450a07b324}     & 1214(14) & 1215(15) & 1216(16) & 1215(15) & 1473(35) \\
         \texttt{A380a07b324}     & 1147(19) & 1151(19) & 1157(18) & 1151(19) & 1458(26) \\
         \texttt{A380a07b324+RW1} & 1164(15) & 1167(13) & 1175(14) & 1167(13) & 1448(20) \\
         \hline
         \texttt{A500a50b324}     & 1280(15) & 1288(13) & 1339(11) & 1296(13) & 1614(23) \\
         \texttt{A360a50b324+RW2} & 1212(20) & 1226(22) & 1268(32) & 1227(24) & 1584(59) \\
         \hline
      \end{tabular}
   \end{center}
	\caption{%
	For each ensemble (possibly with mass reweighting): baryon masses. Values in
	MeV are obtained by using the reference value $(8 t_0)^{1/2} = 0.415 \text{
	fm}$~\cite{Bruno:2016plf}. Notice that some baryons are degenerate because in
	our simulations $m_d=m_s$.
	}
   \label{tab:bar_masses}
\end{table}

\begin{table}
	\small
   \begin{center}
      \begin{tabular}{lccc}
         \hline
         ensemble(+rw) & $M_{n}-M_{p}$ & $M_{\Xi^0}-M_{\Xi^{-}}$ &  $M_{\Sigma^{+}}-M_{\Sigma^-}$ \\
			& [MeV] & [MeV] & [MeV] \\
         \hline
         \hline
         \texttt{A450a07b324}     & -0.89(0.38) & -2.44(0.49) & -1.77(0.89)  \\
         \texttt{A380a07b324}     & 1.80(0.52) & -8.37(0.75)  & -9.96(0.79)  \\
         \texttt{A380a07b324+RW1} & 0.90(0.37) & -5.97(0.63)  & -6.81(0.68)  \\
         \hline                               
         \texttt{A500a50b324}     & 9.2(1.5) & -38.2(2.4) & -46.7(2.7) \\
         \texttt{A360a50b324+RW2} & 10.5(6.0) & -30.2(4.7) & -52(11) \\
         \hline
      \end{tabular}
   \end{center}
	\caption{%
	For each ensemble (possibly with mass reweighting): baryon mass
	differences. Values in MeV are obtained by using the reference value $(8
	t_0)^{1/2} = 0.415 \text{ fm}$~\cite{Bruno:2016plf}.
	}
   \label{tab:bar_massdiffs}
\end{table}

\subsection{Finite-volume effects}\label{sec:overview:fv}

Most of the presented ensembles correspond to a $64 \times 32^3$ lattice. In
order to estimate finite volume effects, we have generated two larger lattices:
an $80 \times 48^3$ lattice for $\alpha = 0$ and a $96 \times 48^3$ for
$\alpha=0.05$. This allows us to get an idea of the finite-volume effects in
particular on the light meson masses.

In the QCD case, our spatial volumes correspond to $M_\pi L \simeq 3.5$ for the
\texttt{A400a00b324} ensemble and $M_\pi L \simeq 5.2$ for the
\texttt{B400a00b324} ensemble. It is useful to compare our results with the ones
of the ALPHA collaboration~\cite{Hollwieser:2020qri} obtained with periodic
boundary conditions (the action parameters are identical). In
figure~\ref{fig:fv-qcd} we show the pion mass for two volumes and different
boundary conditions. It is interesting to notice that finite volume corrections
tend to increase the mass in the case of periodic boundary conditions, while
they tend to decrease the mass in the case of C$^\star$ boundary conditions.
This behaviour is captured by chiral perturbation theory at leading order (for
the periodic case see e.g.~\cite{Colangelo:2005gd}):
\begin{gather}
   M_{\text{P}}(L) = M + \frac{\xi}{3} \sum_{\vec{n} \in \mathbb{Z}^3 \setminus \{0\}} \frac{2}{L} K_1(nML)
   \ ,
   \\
   M_{\text{C}}(L) = M - \frac{\xi}{3} \sum_{\vec{n} \in \mathbb{Z}^3 \setminus \{0\}} \frac{1 - 3(-1)^{\sum_k n_k}}{nL} K_1(nML)
   \ ,
\end{gather}
where $n = | \vec{n}|$, $\xi = M^2/(4 \pi F)^2$ with $F$ being the pion decay
constant, $K_1$ is a modified Bessel function of the second kind, the subscripts
P and C denote periodic and C$^\star$ boundary conditions, respectively. In
figure~\ref{fig:fv-qcd} we plot also the result of the simultaneous fits with
the two above functions in the parameters $M$ and $\xi$. On the large volumes,
finite volume effects on the pion mass are surely not larger than 1\%. While the
statistical errors on the pion masses on the smaller volumes are fairly large
and a definite interpretation would require higher statistics,
figure~\ref{fig:fv-qcd} suggests that finite volume effects are sizable in this
case.

\begin{figure}
   \begin{center}
      \includegraphics[width=.7\textwidth]{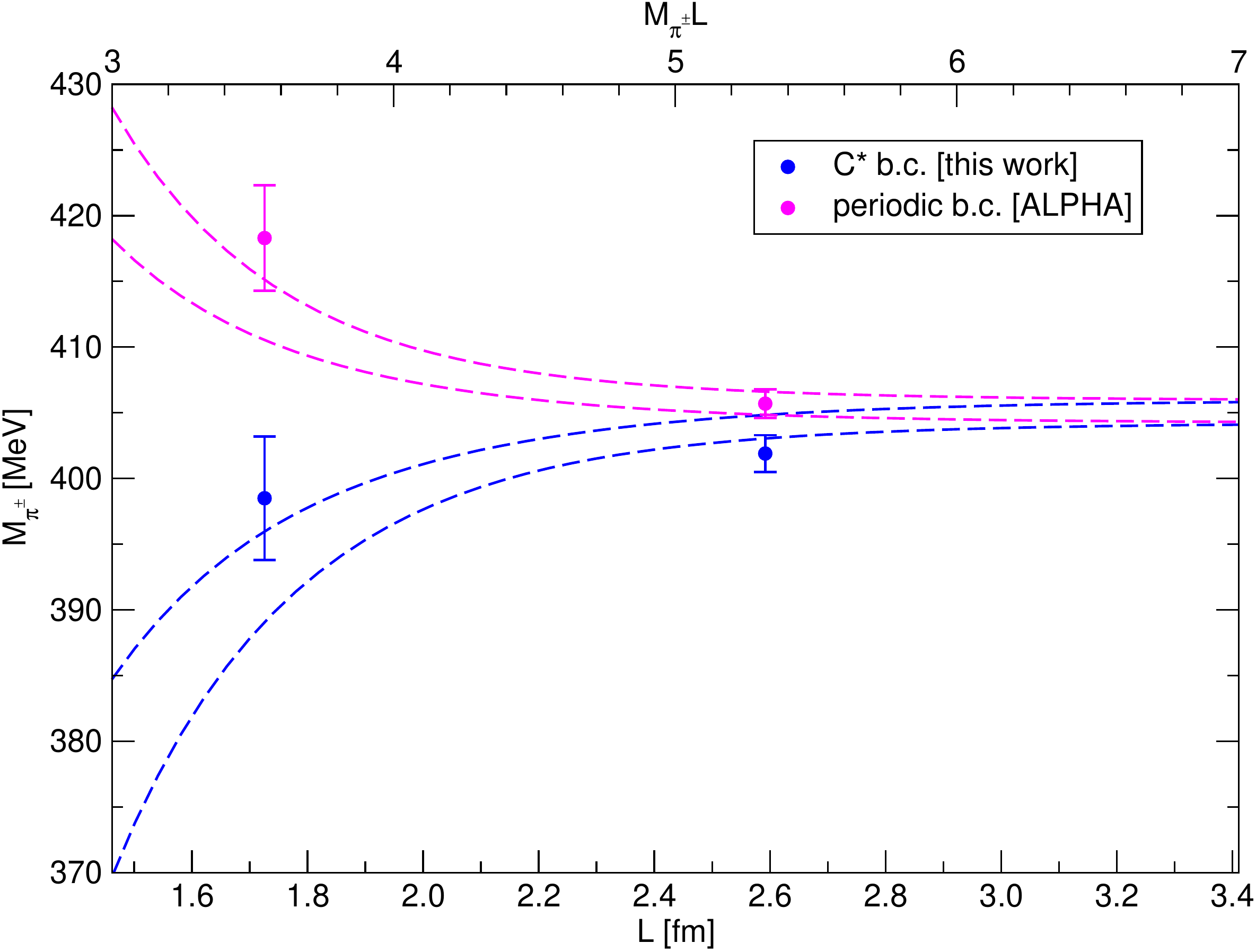}
      \caption{%
      Pion masses for our \texttt{A400a00b324} and \texttt{B400a00b324}
      ensembles with C$^\star$ boundary conditions, and for the \texttt{A1} and
      \texttt{A2} from~\cite{Hollwieser:2020qri} with periodic boundary
      conditions. In both cases physical units have been introduced by setting
      $(8 t_0)^{1/2} = 0.415 \text{ fm}$~\cite{Bruno:2016plf}. The curves are
      the results of the simultaneous fit to the LO $\chi$PT formulae with
      uncertainty bands.
      }
      \label{fig:fv-qcd}
   \end{center}
\end{figure}

When QCD is coupled to QED the pion mass gets power corrections which vanish
like inverse powers of the volume. In the case of C$^\star$ boundary conditions
these finite-volume effects have been derived in~\cite{Lucini:2015hfa}:
\begin{gather}
   M_{\text{C}}(L) = M - \alpha \left\{
   \frac{q^2 \zeta(1)}{2 L} + \frac{q^2 \zeta(2)}{\pi M L^2}
   + \sum_{\ell=0}^\infty \frac{(-1)^\ell \zeta(2\ell+2)}{4\pi M L^{4+2\ell}} T_\ell
   \right\} + O(\alpha^2)
   \ ,
\end{gather}
where $q$ is the charge of the pion, the generalized zeta function $\zeta(p)$ is
defined by
\begin{gather}
   \zeta(p) = \sum_{\vec{n} \in \mathbb{Z}^3 \setminus \{0\}} \frac{(-1)^{\sum_k n_k}}{|\vec{n}|^p}
   \ ,
\end{gather}
and the coefficients $T_\ell$ are related to the coefficients of the Taylor
expansion with respect to the on-shell photon energy of the forward Compton
scattering amplitude of the pion (for more details see~\cite{Lucini:2015hfa}).
The first two terms of the $1/L$-expansion are structure independent and they
have been already subtracted in all hadron masses presented in the tables of
this paper, accordingly to the procedure detailed in
section~\ref{sec:details:meson}. The only exception is table~\ref{tab:fv-data},
in which we present the masses of charged mesons for a couple of ensembles
without the subtraction of the structure-independent finite-volume effects, and
with the subtraction of only the $1/L$ term. For the charged pion on our
\texttt{C380a50b324} ensemble ($L/a=48$ and $\alpha=0.05$), the
structure-independent contributions to the finite-volume effects turn out to be
about 0.9\% and most of the effect comes from the leading $1/L$ term. Hence it
is reasonable to assume that QED finite-volume effects are well under control in
our largest volume, even though a more detailed study would be desirable.

\begin{table}
   \small
   \begin{center}
      \addtolength{\leftskip} {-1cm}
		\addtolength{\rightskip}{-1cm}
      \begin{tabular}{l|ccc|ccc}
         \hline
         & \multicolumn{3}{c|}{$M_{\pi^{\pm}}=M_{K^{\pm}}$ [MeV]} & \multicolumn{3}{c}{$M_{D^{\pm}} = M_{D^{\pm}_s}$ [MeV]} \\
         ensemble(+rw) & no-FV & LO-FV & NLO-FV & no-FV & LO-FV & NLO-FV \\
         \hline
         \hline
         \texttt{A360a50b324+RW2} & 393.4(3.4) & 397.7(3.4) & 398.9(3.4) & 1922(10) & 1926(10) & 1926(10) \\
         \texttt{C380a50b324}     & 383.1(2.4) & 386.0(2.4) & 386.5(2.4) & 1929.0(3.9) & 1931.9(3.9) & 1932.0(3.9) \\
         \hline
      \end{tabular}
   \end{center}
   \caption{
   For the two ensemble with $\alpha_R \simeq 0.04$ and tuned values of the
   quark masses: $\pi^{\pm}$ and $D_{\pi^{\pm}}$ masses calculated with no
   subtraction of the structure-independent finite-volume corrections (no-FV),
   with the subtraction of only the $1/L$ finite-volume correction (LO-FV), with
   the subtraction of the $1/L$ and $1/L^2$ finite-volume corrections (NLO-FV).
   The $1/L^2$ finite volume correction produces a 0.3\% shift on the
   $\pi^{\pm}$ mass on the smaller volume, and a 0.1\% shift on the $\pi^{\pm}$
   mass on the larger volume. Values in MeV are obtained by using the reference
   value $(8 t_0)^{1/2} = 0.415 \text{ fm}$~\cite{Bruno:2016plf}.
   }
   \label{tab:fv-data}
\end{table}

\subsection{Note on simulation cost}\label{sec:overview:cost}

Our production runs have been performed on a variety of machines: Lise at HRLN,
Marconi at CINECA, Eagle at PSNC and Piz Daint at CSCS. In order to be able to
compare the production cost, we have measured the time needed to generate a
thermalized configuration on Lise\footnote{%
Lise has 1236 standard nodes with 384 GB memory, each of them with 2 CPUs. The
CPUs are Intel Cascade Lake Platinum 9242 (CLX-AP) with 48 cores each. The nodes
are connected with an Omni-Path network with a fat tree topology, 14 TB/s
bisection bandwidth and 1.65 $\mu$s maximum latency. Source:
\url{https://www.hlrn.de/supercomputer-e/hlrn-iv-system/?lang=en}.
}
at HLRN for all gauge ensembles. The results are shown in table~\ref{tab:cost};
in particular, we report the specific cost, i.e. the cost in core$\times$seconds
per molecular dynamics unit divided by the number of lattice points.\footnote{%
The reader familiar with \texttt{openQ*D} knows that C$^\star$ boundary
conditions are implemented by means of an orbifold procedure which effectively
doubles the lattice size. In the code, one distiguishes between physical and
extended (i.e. doubled) lattice. Throughout this paper we always refer to the
physical lattice. When we talk about lattice volume, we always refer to the
volume of the physical lattice. In particular, in order to reconstruct the total
cost from table~\ref{tab:cost}, one needs to multiply the specific cost times
the number of points of the physical lattice.
} Comparing specific costs makes sense particularly if the machine is always
used in a regime of reasonably good scaling, which seems to be the case for the
presented runs. In table~\ref{tab:cost} we have also reported the production
cost for the QCD ensemble \texttt{A1} generated by the ALPHA collaboration on
Lise and presented in~\cite{Hollwieser:2020qri}. The production cost is not
reported in the paper and has been kindly provided by the authors
of~\cite{Hollwieser:2020qri}.

The ensemble \texttt{A1} can be compared directly with our ensemble
\texttt{A400a00b324}. The two QCD ensemble use the same discretization of the
action, the same values of $\beta$, bare quark masses and improvement
coefficients, but differ for the volume and the boundary conditions. The
ensemble \texttt{A1} uses a $96 \times 32^3$ lattice with open boundary
conditions in time and periodic boundary conditions in space, while our ensemble
\texttt{A400a00b324} uses a $64 \times 32^3$ lattice with periodic boundary
conditions in time and C$^\star$ boundary conditions in space. The two ensembles
differ also by a number of algorithmic parameters, rendering a precise cost
comparison between the two ensembles complicated. In the following, we use a
back-of-the-envelope calculation to argue that we understand the most important
effects contributing to the cost ratio:
\begin{gather}
   \left[ \frac{
   \text{specific cost}(\texttt{A400a00b324})
   }{
   \text{specific cost}(\texttt{A1})
   }
   \right]_{\text{measured}}
   \simeq
   \frac{0.42}{0.35}
   =
   1.2
   \ .
   \label{eq:cost-ratio}
\end{gather}

\begin{enumerate}
   
   \item Because of C$^\star$ boundary conditions, the Dirac operator is a
   matrix that acts on a vector space with dimension $24V$, as opposed to $12V$
   in case of periodic boundary conditions. This means that the application of
   the Dirac operator on a single pseudofermion costs twice as much as the
   periodic case if the physical volume is the same. C$^\star$ boundary
   conditions contribute with a factor of two to the
   ratio~\eqref{eq:cost-ratio}.
   
   \item A three-level integrator has been used in both cases: 8 steps of
   second-order Omelyan in the outermost level, 1 step of fourth-order Omelyan
   in the intermediate level, 1 step (for \texttt{A400a00b324}) or 2 steps (for
   \texttt{A1}) of fourth-order Omelyan in the innermost level. The difference
   in the innermost level is expected not to have a significant impact on the
   total cost since only the gauge forces are integrated in that level in both
   cases. Therefore, we estimate that the difference in integrator steps
   contributes with a factor of roughly one to the ratio~\eqref{eq:cost-ratio}.
   
   \item In the case of the ensemble \texttt{A1}, the HMC with frequency
   splitting has been used for the up/down doublet and two different rational
   approximations have been used for the strange and charm. In the case of the
   ensemble \texttt{A400a00b324}, a single rational approximation has been used
   for the degenerate up/down/strange triplet, and a separate rational
   approximation has been used for the charm. In spite of this difference, it
   turns out that the total number of fermionic forces that need to be
   calculated is not so different in the two cases. For \texttt{A400a00b324} and
   \texttt{A1}, the outermost level integrates six and five pseudofermion
   forces, respectively, and the intermediate level integrates six and eight
   pseudofermion forces, respectively. Since the forces of the intermediate
   level are calculated much more often than the ones on the outer level, we
   consider only the intermediate level for this back-of-the-envelope
   calculation. Therefore, we estimate that the difference in pseudofermion
   forces contributes with a factor of roughly $6/8=0.75$ to the
   ratio~\eqref{eq:cost-ratio} (reducing the cost gap between the two
   ensembles).
   
   \item Smaller residues have been typically used in \texttt{A1} for the
   solvers used to calculate pseudofermion forces, reducing the cost gap between
   \texttt{A400a00b324} and \texttt{A1} even further. The impact of this effect
   has been estimated by looking at how many times the Dirac operator is applied
   by the various solvers. We estimate that the difference in residues
   contributes with a factor of roughly $0.84$ to the ratio~\eqref{eq:cost-ratio}
   (further reducing the cost gap between the two ensembles).
   
\end{enumerate}
Multiplying all the above factors together we obtain the following estimate for the cost ratio
\begin{gather}
   \left[ \frac{
   \text{specific cost}(\texttt{A400a00b324})
   }{
   \text{specific cost}(\texttt{A1})
   }
   \right]_{\text{estimated}}
   \simeq
   2 \times 1 \times 0.75 \times 0.84
   =
   1.26
   \ ,
\end{gather}
which is remarkably close to the measured ratio~\eqref{eq:cost-ratio}.

The comparison between \texttt{A380a07b324} and \texttt{A360a50b324} (fairly
similar algorithmic parameters where used in these two runs) suggests that the
computational cost does not depend significanly on $\alpha$ in the interesting
region. The comparison between these two ensembles on the one hand and the
ensemble \texttt{A400a00b324} on the other hand shows a clear cost gap between
the QCD+QED and QCD simulations, yielding e.g.
\begin{gather}
   \left[ \frac{
   \text{specific cost}(\texttt{A360a50b324})
   }{
   \text{specific cost}(\texttt{A400a00b324})
   }
   \right]_{\text{measured}}
   \simeq
   \frac{1.05}{0.42}
   =
   2.5
   \ .
\end{gather}
In a certain measure this is due to physics: the QCD ensemble has an SU(3)
flavour symmetry which allows us to use a single rational approximation for the
three light quarks, while the QCD+QED ensembles have only an SU(2) flavour
symmetry forcing us to use a rational approximation for the up quark and a
different rational approximation for the down/strange quarks. However we also
notice that we need to increase the number of integration steps in our QCD+QED
ensembles (from 8 to 12 in the outermost level) in order to keep the acceptance
rate to a reasonable level. The source of this effect is unclear, and we plan to
investigate in the future whether it is possible to avoid it by optimizing the
algorithmic parameters.

The specific cost is essentially the same for the \texttt{A380a07b324} and
\texttt{A450a07b324} ensembles, modulo fluctuations in performance. In
particular we detect no significant dependence on the light quark masses.

The increase in specific cost from the \texttt{A360a50b324} to the
\texttt{C380a50b324} is completely accounted for by the increase in the number
of integration steps in the outermost level (from 12 to 18) needed to compensate
the reduction in acceptance rate due to the larger volume. However, a posteriori
we have overdone it, and we could have probably used some intermediate value.
The increase in specific cost from the \texttt{A400a00b324} to the
\texttt{B400a00b324} is partly accounted for by the increase in the number of
integration steps in the outermost level (from 12 to 16), while the extra cost
may be due to a decrease in efficiency due to use of a highly-asymmetric local
lattice.

More details on the choice of algorithmic parameters are provided in
section~\ref{sec:details:algo}.

\begin{table}
   \small
   \begin{center}
   \begin{tabular}{cccc}
   \hline
   ensemble & global volume & n. cores
   & specific cost
   \\
   \rule{0mm}{4mm}
   & & 
   & $ \left[ \frac{\text{cores}\times\text{secs}}{\text{MDUs}\times\text{points}} \right]$
   \\[2mm]
   \hline
   \hline
   \texttt{A1} \cite{Hollwieser:2020qri} & $96 \times 32^3$ & 6144
   & 0.35 \\
   \hline
   \hline
   \texttt{A400a00b324} & $64 \times 32^3$ & 4096
   & 0.42\\
   \texttt{B400a00b324} & $80 \times 48^3$ & 2560
   & 0.62 \\
   \hline
   \texttt{A450a07b324} & $64 \times 32^3$ & 4096
   & 1.07 \\
   \texttt{A380a07b324} & $64 \times 32^3$ & 4096
   & 1.03 \\
   \hline
   \texttt{A500a50b324} & $64 \times 32^3$ & 4096
   & 0.88 \\
   \texttt{A360a50b324} & $64 \times 32^3$ & 4096
   & 1.05 \\
   \texttt{C380a50b324} & $96 \times 48^3$ & 3072
   & 1.40 \\
   \hline
   \end{tabular}
   \end{center}
   \caption{%
   Cost comparison of all production runs presented in this paper, plus the
   $N_f=3+1$ QCD ensemble \texttt{A1} produced by the ALPHA
   collaboration~\cite{Hollwieser:2020qri}. All wall times have been measured on
   Lise at HLRN. For each run we report the global lattice volume, the number of
   cores, and the specific cost i.e. the cost in coresecs per molecular dynamics
   unit (MDU) divided by the global volume.
   }\label{tab:cost}
\end{table}

\section{Technical details}\label{sec:details}

\subsection{Flow observables}\label{sec:details:flow}

The gradient flow is used to define the auxiliary observable $t_0$ and the
renormalized fine-structure constant $\alpha_R$. In particular, we use the
Wilson-flow discretization~\cite{Luscher:2010iy} for the SU(3) flow equation
\begin{gather}
   a^2 \partial_t U_t(x,\mu) = - g^2 \left\{ \partial_{x,\mu} S_{\mathrm{w,SU(3)}}(U_t) \right\} U_t(x,\mu)
   \ ,
\end{gather}
where $U_t$ is the SU(3) gauge field at positive flow time, and
$S_{\mathrm{w,SU(3)}}(U)$ is the standard SU(3) Wilson action. For the U(1) flow
equation we use the obvious generalization
\begin{gather}
   a^2 \partial_t z_t(x,\mu) = - 4\pi \alpha \left\{ \partial_{x,\mu} S_{\mathrm{g,U(1)}}(z_t) \right\} z_t(x,\mu)
   \ ,
\end{gather}
where $z_t$ is the compact U(1) gauge field at positive flow time,
$S_{\mathrm{g,U(1)}}(z)$ is the action given in
eq.~\eqref{eq:U(1)_gauge_action}. If $\hat{G}_{t,\mu\nu}(x)$ and
$\hat{F}_{t,\mu\nu}(x)$ are, respectively, the clover discretizations of the SU(3)
and U(1) field tensors at positive flow time, we define the clover action
densities as
\begin{gather}
   E_{\mathrm{SU(3)}}(t) = \frac{1}{2} \sum_{\mu\nu} \langle \tr \hat{G}_{t,\mu\nu}^2 \rangle
   \ , \qquad
   E_{\mathrm{U(1)}}(t) = \frac{1}{4 q_{el}^2} \sum_{\mu\nu} \langle \hat{F}_{t,\mu\nu}^2 \rangle
   \ .
\end{gather}
The auxiliary observable $t_0$ is defined as usual by means of the equation
\begin{gather}
   t_0^2 E_{\mathrm{SU(3)}}(t_0) = 0.3
   \ ,
\end{gather}
while the renormalized fine-structure constant is defined at the scale $t_0$ as
\begin{gather}
   \alpha_R = \mathcal{N} t_0^2 E_{\mathrm{U(1)}}(t_0)
   \ .
\end{gather}
Following~\cite{Fritzsch:2013je}, the normalizaton $\mathcal{N}$ is chosen
in such a way that $\alpha_R$ coincides with $\alpha$ at tree level in the
lattice perturbative expansion. Its explicit formula is given by
\begin{gather}
   \mathcal{N}^{-1}
   =
   \frac{2\pi t_0^2}{T L^3} \sum_p \frac{
   \sum_{\mu\nu} \mathring{p}_\mu^2 c_\nu^2 - \sum_\mu \mathring{p}_\mu^2 c_\mu^2
   }{
   \sum_\mu \hat{p}_\mu^2
   }
   e^{-2 t_0 \sum_\mu \hat{p}_\mu^2}
   \ ,
\end{gather}
where the sum runs over all momenta allowed by the boundary conditions
\begin{gather}
   p_0 \in \frac{2\pi a}{T} \left\{ 0 , 1 , 2, \dots, \frac{T}{a}-1 \right\}
   \ , \qquad
   p_k \in \frac{\pi a}{L} \left\{ 1 , 3 , 5, \dots, \frac{2L}{a}-1 \right\}
   \ ,
\end{gather}
and the following definitions have been used:
\begin{gather}
   \hat{p}_\mu = \frac{2}{a} \sin \left( \frac{a p_\mu}{2} \right)
   \ , \qquad
   \mathring{p}_\mu = \frac{1}{a} \sin (a p_\mu)
   \ , \qquad
   c_\mu = \cos \left( \frac{a p_\mu}{2} \right)
   \ .
\end{gather}

\subsection{Meson masses and $\phi$ observables}\label{sec:details:meson}

Since we use the compact formulation of QED, we do not need to fix the gauge.
With this choice, physical states (even charged ones) are invariant under SU(3)
and U(1) local gauge transformations. In finite volume with C$^\star$ boundary
conditions, global U(1) gauge symmetry is broken down to the
$\mathbb{Z}_2$ subgroup which allows to distinguish states with even and odd
electric charge (see~\cite{Lucini:2015hfa} for an extended discussion).
Gauge-invariant quark bilinears are constructed as usual, but the elementary
quark fields $\psi_f$ and $\bar{\psi}_f$ need to be replaced with the dressed
ones:
\begin{gather}
   \Psi_f(x) = \mathcal{D}_f(x) \psi_f(x)
   \ , \qquad
   \bar{\Psi}_f(x) = \bar{\psi}_f(x) \mathcal{D}_f^*(x)
   \ ,
   \label{eq:dressed-psi}
\end{gather}
where the dressing factor $\mathcal{D}_f(x)$ has been chosen to be the
$(\hat{q}_f/2)$-th power of the spatial U(1) Polyakov loops starting from $x$,
averaged over the three spatial directions, i.e.
\begin{gather}
   \mathcal{D}_f(x) = \frac{1}{3} \sum_{k=1}^3 \prod_{s=0}^{L/a} z^{\hat{q}_f/2}(x+a s \hat{k},k)
   \ .
\end{gather}
One easily checks that the dressed quark fields are invariant under local U(1)
gauge transformations thanks to C$^\star$ boundary conditions. Moreover, the
dressing factor $\mathcal{D}_f(x)$ is invariant under 90$^\circ$ rotations
around $x$. The parameter $\hat{q}_f$ is the charge of the quark field in units
of the gauge-action parameter $q_{el}$. With our choice $q_{el} = 1/6$, up-type
quarks have $\hat{q}=4$ and down-type quarks have $\hat{q}=-2$. In all cases the
quantity $\hat{q}_f/2$, which appears in the exponent of the dressing factor, is
an integer.

Because of the boundary conditions, eigenstates of the momentum operator are
automatically eigenstates of the charge conjugation operator. In particular,
C-even fields are periodic and C-odd fields are antiperiodic in all spatial
directions. In order to construct zero-momentum fields, one needs to construct
C-even combinations first. The C-even zero-momentum interpolating operators of
pseudoscalar mesons are given by
\begin{gather}
   P_{fg}(x_0)
   =
	\sum_{\vec{x}} \left\{
   \bar{\Psi}_{f} \gamma_5 \Psi_{g}(x_0,\vec{x})
   + \bar{\Psi}_{g} \gamma_5 \Psi_{f}(x_0,\vec{x})
	\right\}
   \ ,
\end{gather}
for generic flavour indices $f$ and $g$, and two-point functions are defined as
\begin{gather}
   C_{fg}(x_0) = \langle P_{fg}(x_0) P_{fg}(0) \rangle \ ,
\end{gather}
In this work we consider only the two-point functions with $f \neq g$ which can
be written in terms of standard quark-connected diagrams. We stress that, even
though the interpolating operators $P_{fg}(x)$ are non-local because of the
dressing factors, they are local in time, and the zero-momentum two-point
function has a standard spectral representation which allows the extraction of
Hamiltonian eigenstates from its exponential decay at large $x_0$.

Given the zero-momentum two-point function $C(x_0)$, we define the effective
mass $M(x_0)$ by solving the following equation numerically:
\begin{gather}
    \frac{C(x_0+a)}{C(x_0)}
    =
    \frac{\cosh \left[\left(x_0+a-\frac{T}{2}\right) \, M(x_0) \right]}{\cosh\left[ \left(x_0-\frac{T}{2}\right) \, M(x_0)\right]} \ .
\end{gather}

Hadron masses get power-law finite-volume corrections due to the coupling to the
photon. In the case of C$^\star$ boundary conditions these have been calculated
in~\cite{Lucini:2015hfa}. The LO and NLO corrections in $1/L$ are universal and
are subtracted from the effective mass by means of the formula
\begin{gather}
   M_c(x_0) = M(x_0) - \alpha_R q^2 \left\{ \frac{\zeta(1)}{2L} + \frac{\zeta(2)}{\pi M(x_0) L^2} \right\}
   \ , \label{eq:corrected-effmass}
\end{gather}
where $\zeta(1) = -1.7475645946\dots$ and $\zeta(2) = -2.5193561521\dots$ and
$q$ is the charge of the considered hadron. The meson mass is simply obtained by
fitting the plateaux of the corrected effective mass $M_c(x_0)$ to a constant,
and by checking the stability of the result under variation of the plateau. The
effective $\phi_{1,2,3}(x_0)$ observables have been calculated by applying the
definition~\eqref{eq:phi-definitions} to the corrected effective masses of the
relevant mesons. The $\phi$ observables are obtained by fitting the plateaux of
the corresponding effective quantity to a constant, and by checking the
stability of the result under variation of the plateaux. A selection of
effective masses and $\phi$'s with the corresponding plateau fits are shown in
figures~\ref{fig:effmasses-1} and \ref{fig:effmasses-2}.

\begin{figure}
   \begin{center}
      \begin{minipage}[c]{.625\textwidth}
         \includegraphics[width=\textwidth]{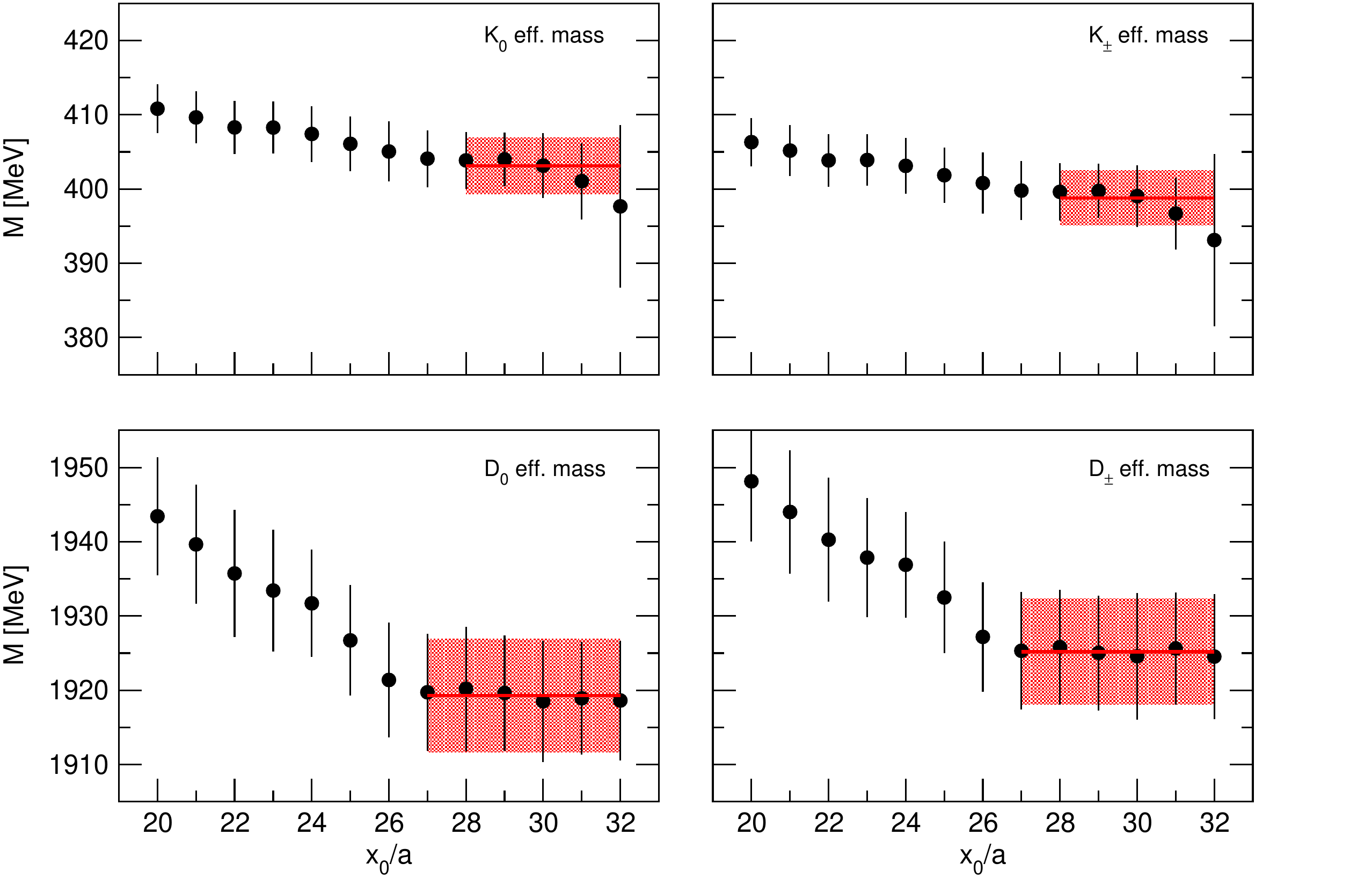}
      \end{minipage}\begin{minipage}[c][6cm][c]{.35\textwidth}
         \includegraphics[width=\textwidth]{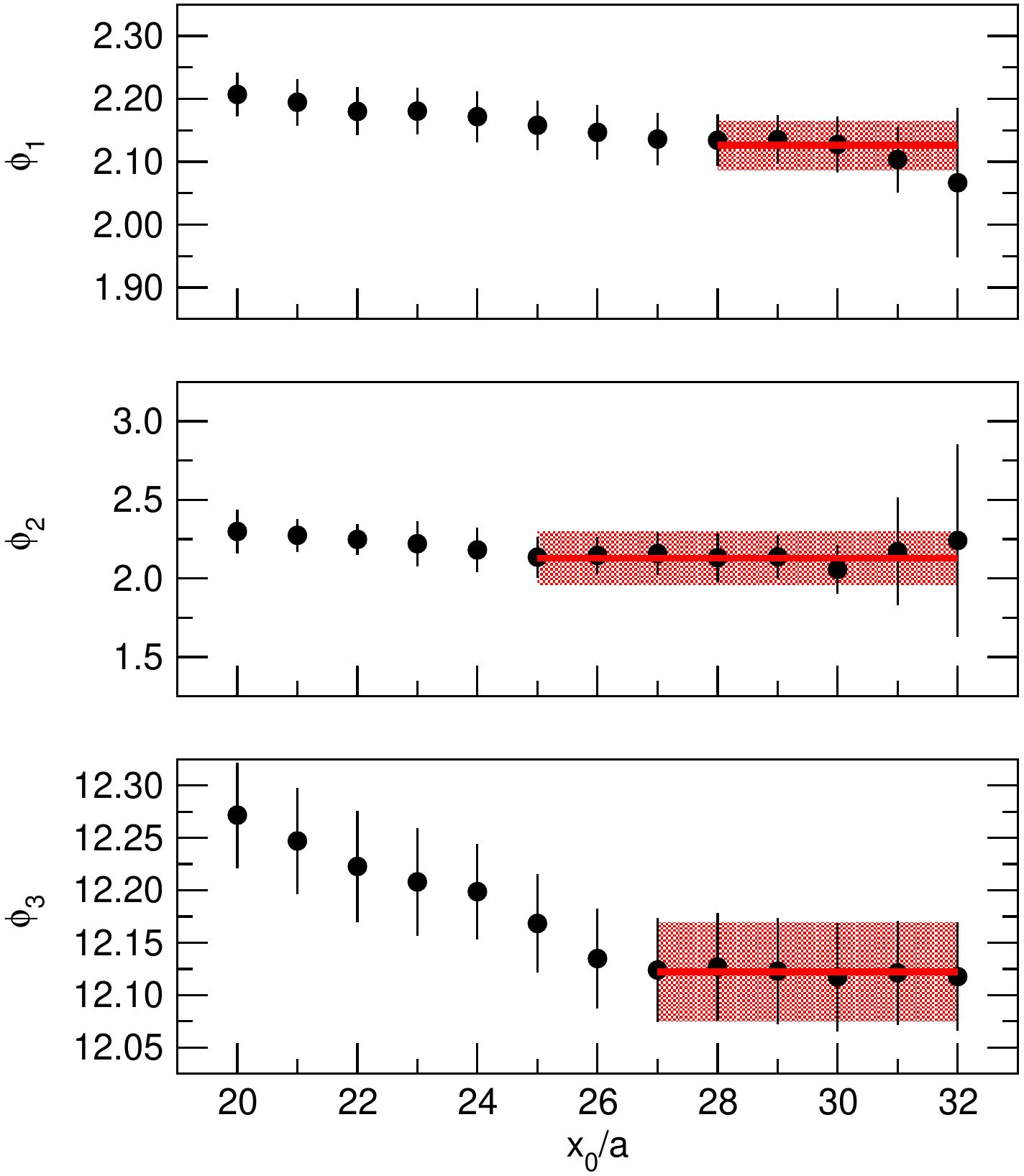}
      \end{minipage}
      \caption{%
      Meson effective masses and effective $\phi$ observables for the ensemble
      \texttt{A380a07b324+RW1}, together with the selected plateaux and the fits
      to a constant. Values in MeV are obtained by using the reference value $(8
      t_0)^{1/2} = 0.415 \text{ fm}$~\cite{Bruno:2016plf}.
      }
      \label{fig:effmasses-1}
   \end{center}
\end{figure}

\begin{figure}
   \begin{center}
      \begin{minipage}[c]{.625\textwidth}
         \includegraphics[width=\textwidth]{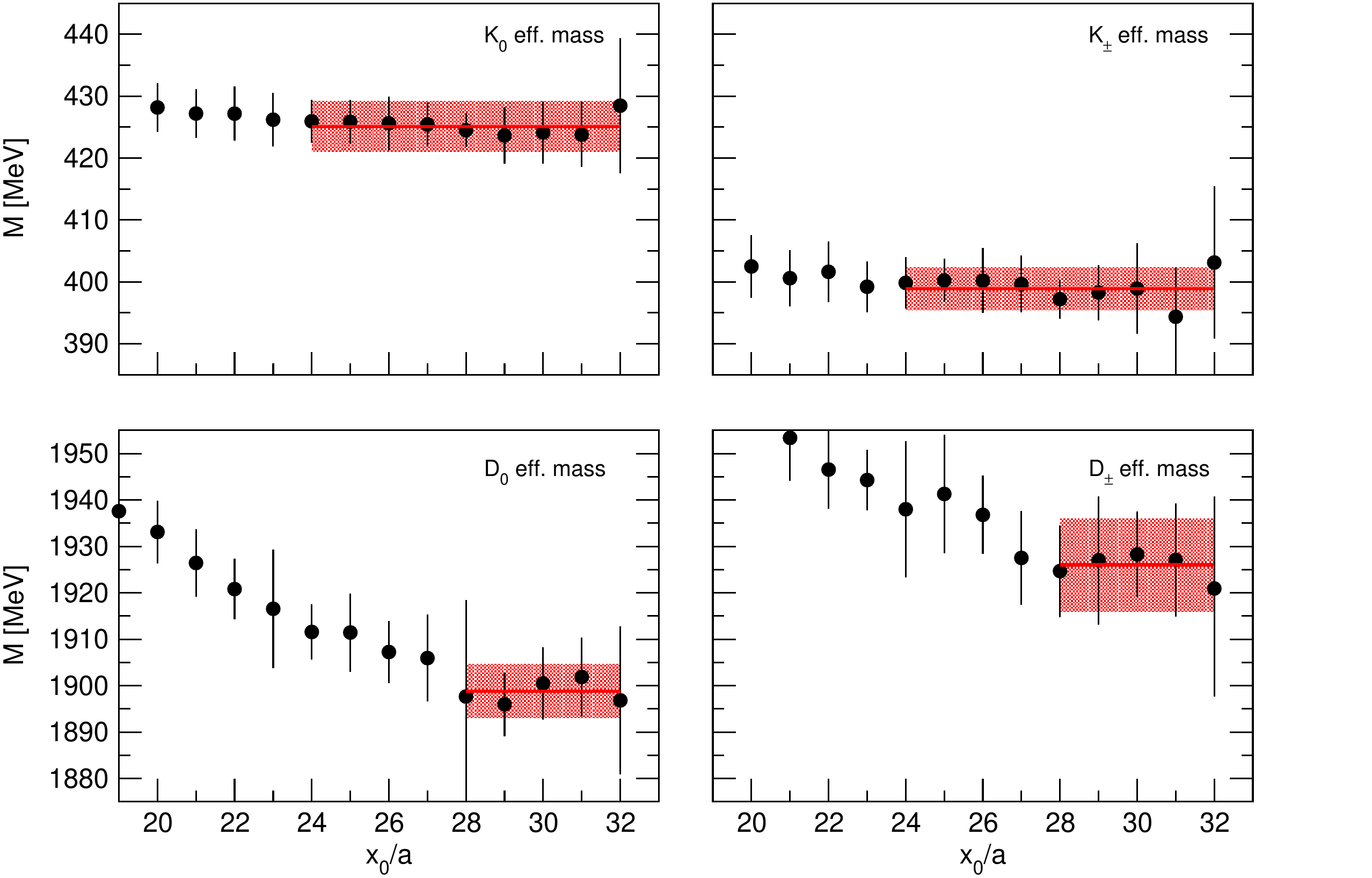}
      \end{minipage}\begin{minipage}[c][6cm][c]{.35\textwidth}
         \includegraphics[width=\textwidth]{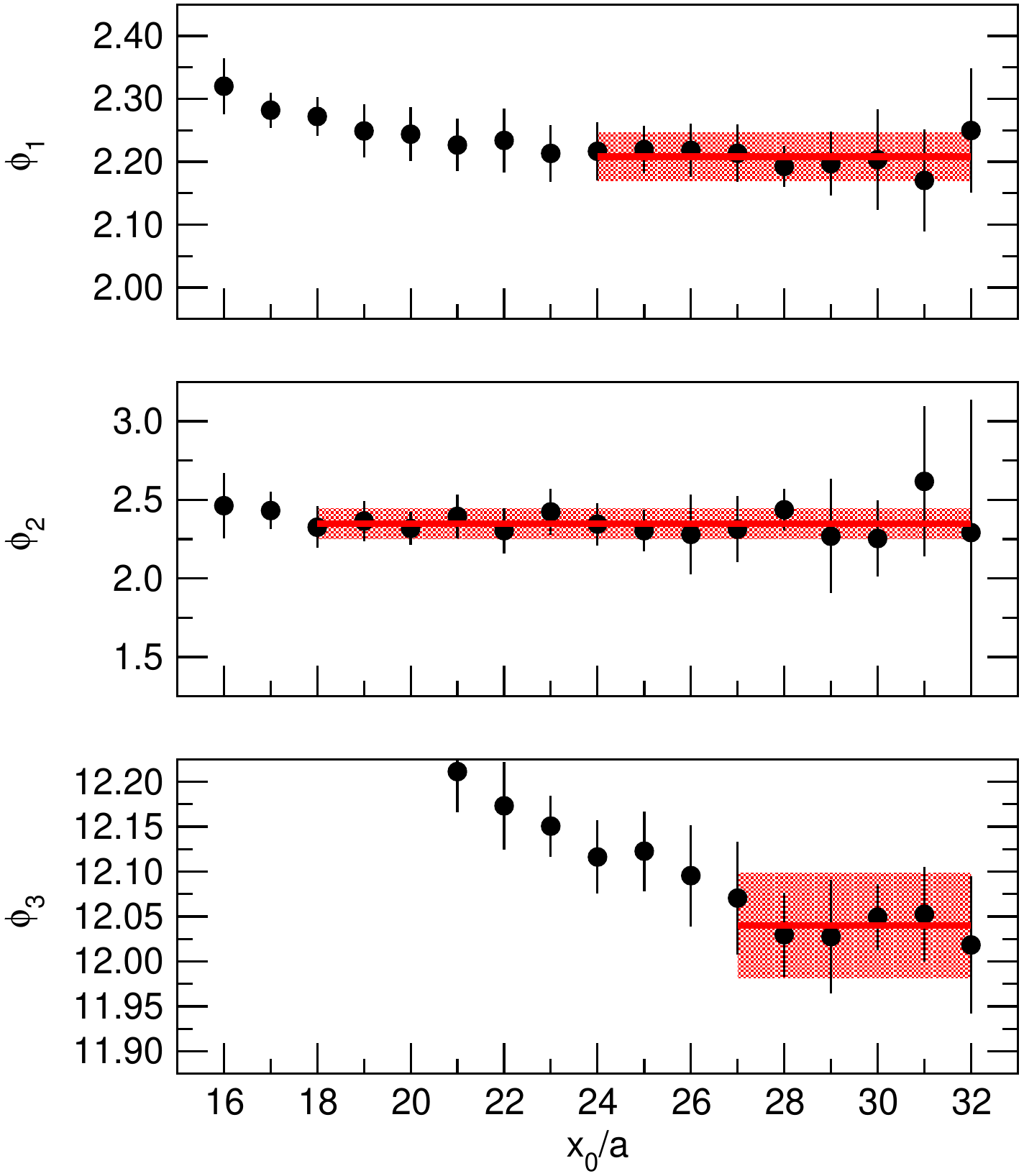}
      \end{minipage}
      \caption{%
      Meson effective masses and effective $\phi$ observables for the ensemble
      \texttt{A360a50b324+RW2}, together with the selected plateaux and the fits
      to a constant. Values in MeV are obtained by using the reference value $(8
      t_0)^{1/2} = 0.415 \text{ fm}$~\cite{Bruno:2016plf}.
      }
      \label{fig:effmasses-2}
   \end{center}
\end{figure}

\subsection{Baryon masses}\label{sec:details:baryon}

Baryon interpolating operators are written in terms of Gaussian-smeared
fermion fields defined by
\begin{gather}
   \Psi_{\text{(s)}} = ( 1 + \omega H)^n \Psi \ ,
   \qquad
   \bar{\Psi}_{\text{(s)}} = \bar{\Psi} ( 1 + \omega H)^n \ ,
\end{gather}
where $\Psi$ and $\bar{\Psi}$ are the dressed fermion fields defined in
eq.~\eqref{eq:dressed-psi}, $\omega$ and $n$ are adjustable parameters, and $H$
is the spatial hopping operator given by
\begin{gather}
   H \Psi(x) = \sum^{3}_{k=1}
   \left\{ V(x,k) \Psi(x+a\hat{e}_k)+ V(x-a\hat{e}_k,k)^\dag \Psi(x-a\hat{e}_k) \right\}
   \ .
\end{gather}
In this formula $V$ is an SU(3) smeared link variable. In practice, we construct
$V$ by means of a generalization of the gradient flow restricted to a single
time slice, i.e. we solve a dicretized version of the following differential
equation
\begin{gather}
   \partial_s U_s(x,k) = \partial_{x,k} \sum_{i \neq j} \tr P_{s,ij}(x)
   \ ,
\end{gather}
where $P_{s,ij}(x)$ is the plaquette in $x$ on the plane identified by the
indices $(i,j)$, constructed with the field $U_s$, and the initial condition
$U_0 = U$ is used. The smeared field $V$ is identified with $U_s$ at the chosen
maximum value of the auxiliary flowtime $s$. We notice that the smeared fields
are local in time and invariant under U(1) gauge transformations.

For definiteness we consider the chiral representation of the gamma matrices and
$C=i \gamma_0 \gamma_2$. The C-even zero-momentum interpolating operators for
spin-1/2 baryons considered in this work can be all written in the form
\begin{gather}
   \label{eq:baryon:intop:B}
   B(x_0) = \sum_{\vec{x}} \sum_{\substack{abc\\fgh}}
   \epsilon_{abc} F_{fgh} 
   \\ \hspace{18mm} \nonumber 
   \times \left\{ \Psi_{\text{(s)}fa} \Psi_{\text{(s)}gb}^t C \gamma_5 \Psi_{\text{(s)}hc}(x_0,\vec{x})
   - C \bar{\Psi}^t_{\text{(s)}fa} \bar{\Psi}_{\text{(s)}gb} C \gamma_5 \bar{\Psi}^t_{\text{(s)}hc}(x_0,\vec{x}) \right\}
   \ ,
\end{gather}
where $a,b,c$ are colour indices and $f,g,h$ are flavour indices (spin indices
are implicit or contracted).
Different baryons are obtained by choosing particular tensors $F_{fgh}$,
according to the table~\ref{tab:baryons1/2}. In this case the zero-momentum
two-point function is defined as
\begin{gather}
   C(x_0) = \langle B^t(0) C \frac{1 + \gamma_0}{2} B(x_0) \rangle
   \ .
\end{gather}
%
%
For the $\Omega^-$ baryon we use the following C-even zero-momentum
interpolating operator
\begin{gather}
   \Omega_{j}(x_0) =  \sum_{\vec{x}} \sum_{abc} \sum_k
   \epsilon_{abc} \left( \delta_{jk} - \tfrac{1}{3} \gamma_j \gamma_k \right)
   \\ \hspace{18mm} \nonumber 
   \times \left\{ S_{\text{(s)}a} S_{\text{(s)}b}^t C \gamma_k S_{\text{(s)}c}(x_0,\vec{x})
   - C \bar{S}^t_{\text{(s)}a} \bar{S}_{\text{(s)}b} \gamma_k C \bar{S}^t_{\text{(s)}c}(x_0,\vec{x}) \right\}
   \ ,
\end{gather}
where $S_{\text{(s)}}= \Psi_{\text{(s)}s}$ is the smeared dressed field of the
strange quark. In this case the zero-momentum two-point function is defined as
\begin{gather}
   C(x_0) = \sum_j \langle \Omega^t_j(0) C \frac{1 + \gamma_0}{2} \Omega_j(x_0) \rangle
   \ .
\end{gather} 

In practice we use point sources to calculate the needed two-point functions.
For each configuration we construct $12h$ point sources (running over each color
and spin index, and located at $h$ random timeslices), where $h=8$ for all
ensembles except \texttt{A500a50b324} and \texttt{A360a50b324+RW2} for which we
have used $h=4$. We used the Generalized Eigenvalue
Problem~\cite{Guagnelli:1998ud,Luscher:1990ck} to optimize the smearing
parameters. The results given in this paper use a smearing $s = 3.6$, $n=400$
and $\omega=0.5$ for the source and no smearing for the sink. Because of the
boundary conditions quark-quark and antiquark-antiquark Wick contractions do not
vanish (but are exponentially suppressed with the volume). In this work we have
simply neglected these Wick contractions which effectively means that we are
calculating the two-point function of some partially-quenched baryons as
discussed in section~\ref{sec:overview:masses}. We plan to quantify the
contribution of the extra Wick contractions in future work.

\begin{table}
   \begin{center}
      \begin{tabular}{cc}
         $(1/2)^+$ Baryon & Non-zero components of $F_{fgh}$ \\
         \hline
         p & $F_{uud} = 1$ \\
         n & $F_{ddu} = 1$ \\
         $\Lambda_0$ & $F_{sud} = 2$, $F_{dus} = 1$, $F_{uds} = -1$ \\
         $\Sigma^+$ & $F_{uus} = 1$ \\
         $\Sigma^-$ & $F_{dds} = 1$ \\
         $\Xi_0$ & $F_{ssu} = 1$ \\
         $\Xi^-$ & $F_{ssd} = 1$ \\
         \hline
      \end{tabular}
   \end{center}
   \caption{%
   Flavour tensor $F_{fgh}$ defining the interpolating operators for spin-$1/2$
   baryons via eq.~\eqref{eq:baryon:intop:B}. The flavour indices can take
   values $u$, $d$, $s$, $c$.
   }
   \label{tab:baryons1/2}
\end{table}

Given the zero-momentum two-point function $C(x_0)$, we define the effective
mass $M(x_0)$ simply as:
\begin{gather}
   M(x_0)= 
   \frac{1}{a} \log \frac{C(x_0)}{C(x_0+a)}
   \ .
\end{gather}
The effective mass $M_c(x_0)$ corrected for structure-independent finite-volume
effects is defined as for the mesons using eq.~\eqref{eq:corrected-effmass}. A
selection of effective masses and mass differences with the corresponding
plateau fits are shown in figures~\ref{fig:baryons:effmasses-1} and
\ref{fig:baryons:effmasses-2}.

\begin{figure}
   \begin{center}
      \includegraphics[width=\textwidth]{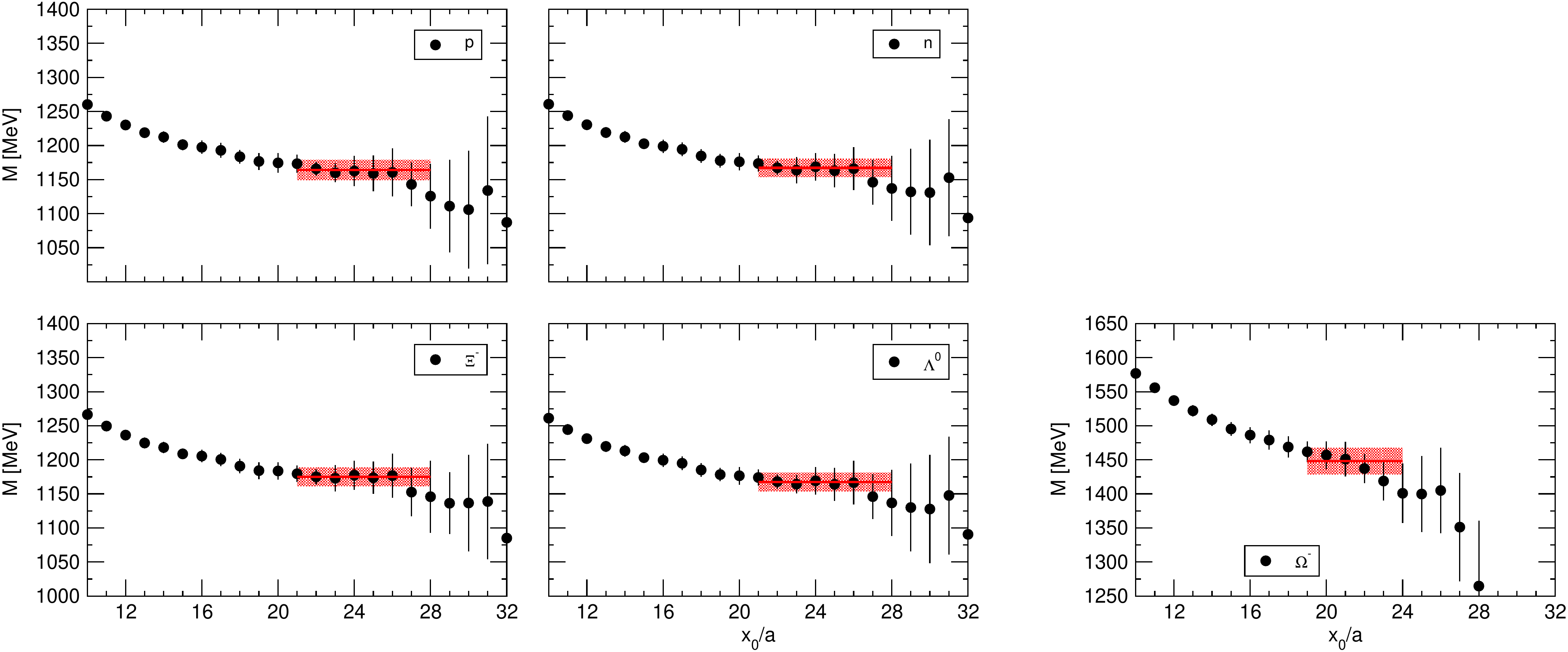}
      
      \vspace{3mm}
      
      \includegraphics[width=\textwidth]{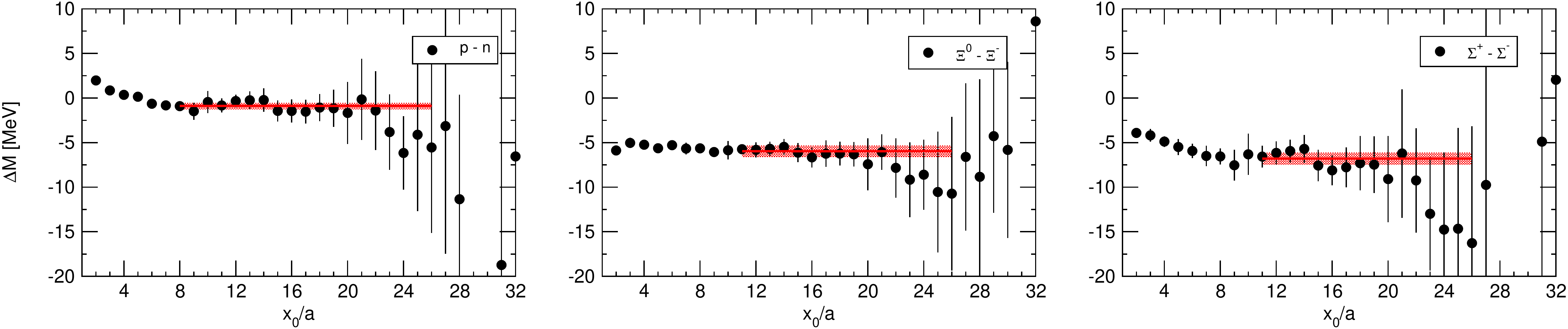}      
      \caption{%
      Baryon effective masses for the ensemble \texttt{A380a07b324+RW1}, with
      the selected plateaux and the fits to a constant. Values in MeV are
      obtained by using the reference value $(8 t_0)^{1/2} = 0.415 \text{
      fm}$~\cite{Bruno:2016plf}.
      }
      \label{fig:baryons:effmasses-1}
   \end{center}
\end{figure}

\begin{figure}
   \begin{center}
      \includegraphics[width=\textwidth]{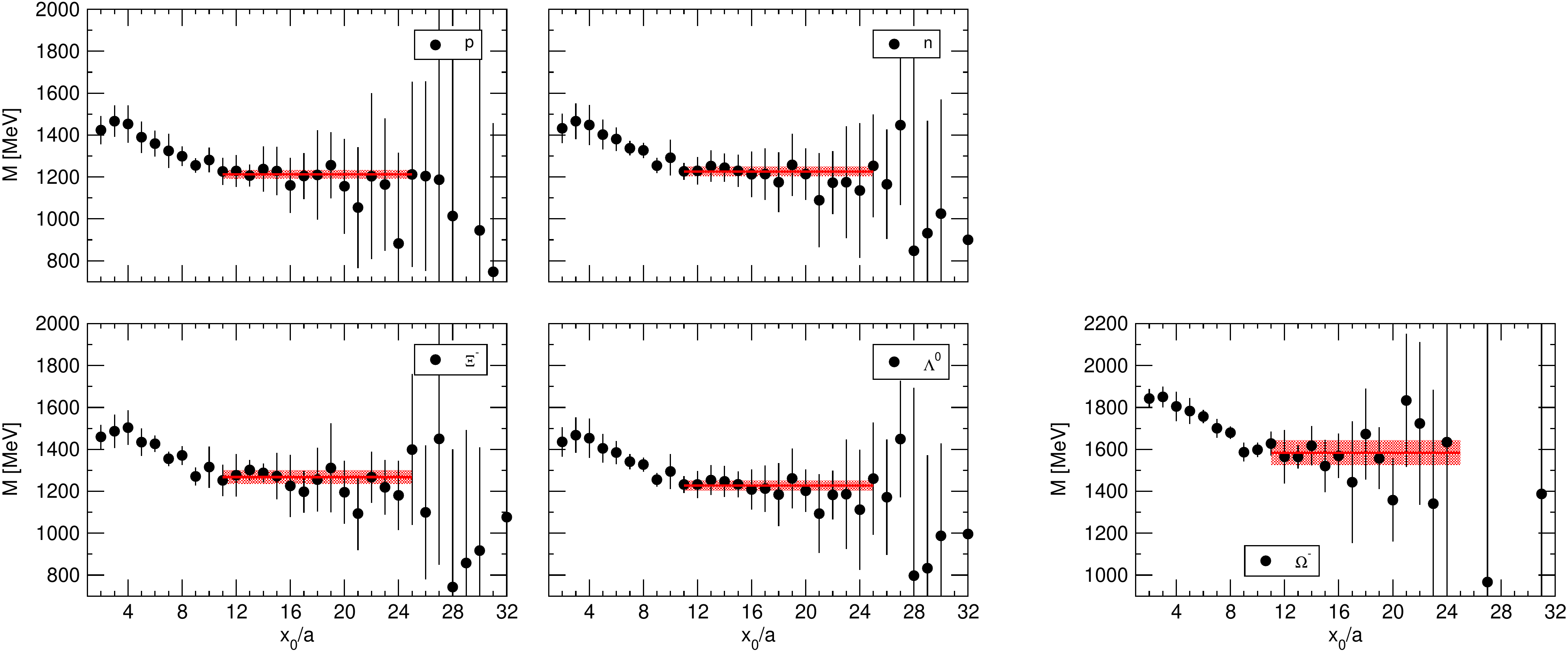}
      
      \vspace{3mm}
      
      \includegraphics[width=\textwidth]{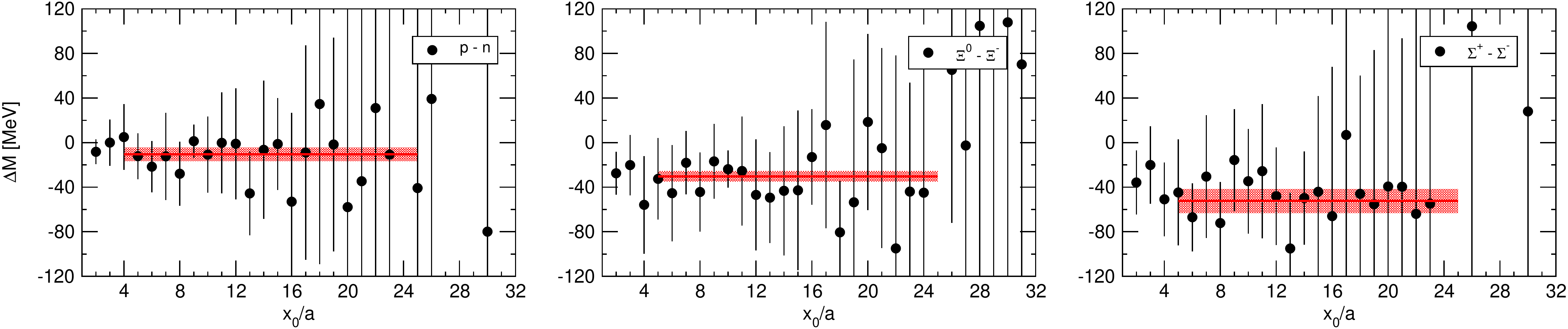}      
      \caption{%
      Baryon effective masses and effective $\phi$ observables for the ensemble
      \texttt{A360a50b324+RW2}, with the selected plateaux and the fits to a
      constant. Values in MeV are obtained by using the reference value $(8
      t_0)^{1/2} = 0.415 \text{ fm}$.
      }
      \label{fig:baryons:effmasses-2}
   \end{center}
\end{figure}

\subsection{Tuning strategy}\label{sec:details:tuning}

For fixed values of $\alpha$ and $\beta$, the bare quark masses need to be tuned
to obtain the desired values of the $\phi$ variables given by
eq.~\eqref{eq:Uline}. Since we have chosen to work with $m_d = m_s \equiv
m_{ds}$, this is a three-parameter tuning problem. In practice, we have followed
the following steps:
\begin{enumerate}
   \item Generate some ensembles with smaller statistics ($\sim$200 thermalized
   configurations), and get a rough estimate $\hat{m}^{(0)} = (m_u^{(0)},
   m_{ds}^{(0)}, m_c^{(0)})$ for the quark masses.
   \item Generate an ensemble with full statistics (at least 1000 thermalized
   configurations) with quark masses equal to $\hat{m}^{(0)}$. Calculate the
   values of the $\phi^{(0)} = (\phi^{(0)}_1 , \phi^{(0)}_2, \phi^{(0)}_3)$
   observables on these configurations.
   \item Choose three new sets of quark masses $\hat{m}^{(i)}$ with $i=1,2,3$
   fairly close to $\hat{m}^{(0)}$, and calculate the values of the $\phi^{(i)}$
   observables corresponding to these quark masses, by means of mass
   reweighting. Find the tuned values of the quark masses $\hat{m}^{(t)}$ by
   linear interpolation, i.e. by assuming that the $\phi$ observables depend on
   the masses $\hat{m}$ as in $\phi = A \hat{m} + b$, where $A$ is a $3 \times
   3$ matrix and $b$ is a $3$-vector. A few attempts may be necessary in order
   to find values for $\hat{m}^{(i)}$ for which the reweighting does not have an
   overlap problem, and for which the tuned value is found either by
   interpolation or by a mild extrapolation.
   \item Generate an ensemble with full statistics (2000 thermalized
   configurations) with quark masses equal to $\hat{m}^{(t)}$. Calculate the
   values of the $\phi^{(t)}$ observables on these configurations.
   \item If the extrapolation in point 2 is too long, then one does not get the
   target value for the $\phi$ observables, and one needs to repeat everything
   from step 2 with $\hat{m}^{(0)} \leftarrow \hat{m}^{(t)}$. On the other hand,
   some residual small mistuning due to the linear approximation is corrected by
   repeating step 2 with $\hat{m}^{(0)} \leftarrow \hat{m}^{(t)}$. In this case
   a corrected tuned value $\hat{m}^{(t)}$ is found, and the observables are
   calculated by mass reweighting to the corrected tuned value, without
   generating new configurations.
\end{enumerate}
In this paper we describe all ensembles with full statistics
that we have generated in the tuning procedure, but not all intermediate mass
reweighting factors.

It is worth noticing that we have tried carrying out step 2 by changing only the
valence quark masses, i.e. without including a mass reweighting factor. However,
we usually incurred into a problem of overshooting which rendered this strategy
unusable. Nevertheless, if one starts from a value of $\hat{m}^{(0)}$ which is
far away from the target value, a first tuning iteration performed by changing
only the valence quark masses can be a relatively inexpensive way to move
towards the correct region of parameter space.

\subsection{Statistical analysis}\label{sec:details:analysis}

Errors are calculated with the gamma method in the particular incarnation
of~\cite{Schaefer:2010hu}. The integrated autocorrelation time is calculated
first with the Wolff's automatic windowing procedure~\cite{Wolff:2003sm} with
parameter $S=1$. Among all considered observables $t_0/a^2$ has the largest
integrated autocorrelation time and we use this as an estimate of the
exponential autocorrelation time $\tau_{\text{exp}}$, which is a property of the
particular Markov chain rather than of the observable. For each observable the
autocorrelation function $\Gamma(t)$ is calculated from data for
$t=0,1,\dots,\bar{t}$, where $\bar{t}$ is chosen in such a way that the central
value of $\Gamma(t)$ is positive for any $t \le \bar{t}$ and negative for
$t=\bar{t}+1$. Then the autocorrelation function is extended for $t > \bar{t}$
with a single exponential $\Gamma(t) = \Gamma(\bar{t}) \exp
[-(t-\bar{t})/\tau_{\text{exp}} ]$. The extended autocorrelation function is
used in the gamma method to calculate errors and integrated autocorrelation
times. The analysis has been carried out with our own code, which implements
the ideas of~\cite{Ramos:2018vgu}.

\subsection{Sign of the pfaffian}\label{sec:details:sign}

Given a quark field $\psi$, we introduce the corresponding antiquark field
$\psi^{\mathcal{C}}=C^{-1}\bar{\psi}^t$, where the charge-conjugation matrix $C$
can be chosen to be $i\gamma_0\gamma_2$ in the chiral basis. C$^\star$ boundary
conditions for the fermion fields can be written as
\begin{gather}
    \begin{pmatrix} \psi(x+L\hat{k})\\
    \psi^\mathcal{C}(x+L\hat{k})
    \end{pmatrix}
    =
    \begin{pmatrix} \psi^\mathcal{C}(x)\\
    \psi(x)
    \end{pmatrix}
    \equiv
    K\begin{pmatrix} \psi(x)\\
    \psi^\mathcal{C}(x)
    \end{pmatrix}
    \ .
\end{gather}
With C$^\star$ boundary conditions the Dirac operator $D$ acts on the
quark-antiquark doublet in a non-diagonal way, and it is therefore a $24 V
\times 24 V$ matrix, where $V = TL^3/a^4$. An explicit representation for the
Dirac operator is given in appendix~\ref{app:pfaffian}. The integration of a
quark field in the path integral yields the pfaffian $\pf (CKD)$ in place of the
standard fermionic determinant, where $CKD$ is an antisymmetric matrix (see
proposition~\ref{prop:antisymmetry} in appendix~\ref{app:pfaffian}). The
absolute value of the pfaffian is given by
\begin{gather}
    \left| \pf (CKD) \right|
    =
    \left| \det D \right|^{1/2}
    \ ,
    \label{eq:power1/4}
\end{gather}
and can be simulated by means of a standard RHMC algorithm, while the sign of
the pfaffian can be incorporated as a reweighing factor.

In order to calculate this sign, it is convenient to relate the pfaffian to the
spectrum of the hermitian Dirac operator $Q=\gamma_5 D$. It turns out that the
spectrum of $Q$ is doubly degenerate (see proposition~\ref{prop:degeneracy} in
appendix~\ref{app:pfaffian}). Let $\lambda_{n=1,\dots,12V} \in \mathbb{R}$ be
the list of eigenvalues of $Q$, each of them appearing a number of times equal
to half their degeneracy. The following simple formula holds (see
proposition~\ref{prop:pfaffian} in appendix~\ref{app:pfaffian}):
\begin{gather}
    \pf (CKD)
    =
    \prod_{n=1}^{12V} \lambda_n
    \ . \label{eq:pf-lambda}
\end{gather}
It follows that the pfaffian is positive (resp. negative) if the number of
negative eigenvalues $\lambda_n$ is even (resp. odd). In practice, we calculate
the sign by following the eigenvalue flow as a function of the quark mass $m$.
We use the crucial fact that the eigenvalues of $Q(m)$ can be labeled in such a
way that they are continuous functions of $m$. As $m$ is continuously varied,
the pfaffian flips sign every time a degenerate pair of eigenvalues of $Q(m)$
crosses zero. It follows that
\begin{gather}
   \sgn \pf [CKD(m)] = (-1)^{c(m,M)} \sgn \pf [CKD(M)]
   \label{eq:pf-relative-sign}
   \ ,
\end{gather}
where we have highlighted the mass dependence of the Dirac operator, and
$c(m,M)$ is the number of degenerate pairs of eigenvalues of $Q(m)$ crossing zero
as the mass is continuously varied from $m$ to $M$.

If $M$ is chosen very large, then $Q(M)$ is approximately equal to $M \gamma_5$
and the number of negative $\lambda_n$'s is even and equal to $6V$. Hence, the
pfaffian of $CKD(M)$ is positive, and eq.~\eqref{eq:pf-relative-sign} implies
that the sign of the pfaffian of $CKD(m)$ can be calculated by counting the
number of eigenvalue pairs crossing zero between the target mass $m$ and some
large mass $M$: the pfaffian is positive if this number is even and negative
otherwise.

In the particular case of QCD+QED simulations with 4 flavours we use the
following observation to eliminate the arbitrariness associated to the choice of
$M$. If $D_{+2/3}(m)$ is the Dirac operator for a quark with electric charge
$+2/3$, then the Dirac operators for the up and charm quarks are simply $D_u =
D_{+2/3}(m_u)$ and $D_c = D_{+2/3}(m_c)$, since the two quarks differ only for
their mass. Using eq.~\eqref{eq:pf-relative-sign}, one gets the contribution to
the sign of up and charm quarks as:
\begin{gather}
   \sgn \pf (CKD_u) \, \sgn \pf (CKD_c) = (-1)^{c_{+2/3}(m_u,m_c)}
   \label{eq:pf-sign-uc}
   \ ,
\end{gather}
where the subscript of $c_{+2/3}$ stresses the fact that we need to count the
eigenvalue crossings of the hermitian Dirac operator for a quark with electric
charge $+2/3$. Analogously, the contribution to the sign of down and
strange quarks is:
\begin{gather}
   \sgn \pf (CKD_d) \, \sgn \pf (CKD_s) = (-1)^{c_{-1/3}(m_d,m_s)}
   \label{eq:pf-sign-ds}
   \ .
\end{gather}
The reweighting factor needed to account for the sign of the fermionic pfaffian
is then given by:
\begin{gather}
   W_{\text{sgn}} = \prod_{f=u,d,s,c} \sgn \pf (CKD_f) = (-1)^{c_{+2/3}(m_u,m_c)} (-1)^{c_{-1/3}(m_d,m_s)}
   \label{eq:pf-sign}
   \ .
\end{gather}
At the U-symmetric point $m_d=m_s$ one has trivially $c_{-1/3}(m_d,m_s) = 0$,
and one only needs to count the number of eigenvalue crossings for up-type
quarks as the mass is varied from the up-quark mass to the charm-quark mass.

We give a brief account of the techniques used in this work to calculate
$c(m,M)$, i.e. to count the number of eigenvalue crossings of $Q$ as the mass is
varied continuously form $m$ to $M$. Our method is based on two stages:
\begin{enumerate}
   
   \item A first fast algorithm finds an interval $I \subset [m,M]$ in the
   considered range of bare masses, with the property that no sign flip occurs
   in the complement of $I$. Typically $I$ is much smaller in length than the
   original $[m,M]$, and for most configurations $I$ turns out to be empty.
   
   \item If $I$ is not empty, we follow the flow of a certain number of
   eigenvalues which are the closest to zero, only inside the typically small
   interval $I$, using the methods described
   in~\cite{Campos:1999du,Mohler:2020txx}. By tracking the eigenvalues as
   functions of the mass, we can determine whether a sign flip occurs.

\end{enumerate}
To the best of our knowledge, the first step of our method has never been used
in similar calculations. Since it speeds up significantly the calculation of the
sign of fermionic pfaffian, and it can be used also in different contexts (e.g.
for the calculation of the sign of the fermionic determinant in QCD, or the sign
of the fermionic pfaffian in gauge theories with adjoint fermions), we discuss
it here in some detail.

The starting point is the following simple observation. Let $\mu(m)$ be the
smallest eigenvalue of $|Q(m)|$, i.e.
\begin{gather}
   \mu(m) = \min_n |\lambda_n(m)| \ .
\end{gather}
If $\bar{\mu} \equiv \mu(\bar{m}) > 0$, then no eigenvalue $\lambda_n(m)$ flips
sign as long as $m$ is in the interval defined by $|m-\bar{m}|<\bar{\mu}$.

\begin{proof}
   
   We use the notation $m = \bar{m} + \delta m$. Let $v$ be a normalized vector.
   Using the identity $Q(m) = Q(\bar{m}) + \delta m \, \gamma_5$, the triangular
   inequality and the unitarity of $\gamma_5$, one readily derives
   \begin{gather}
      \left\| Q(m) v \right\|
      \ge
      \left\| Q(\bar{m}) v \right\| - |\delta m| \, \left\| \gamma_5 v \right\|
      =
      \left\| Q(\bar{m}) v \right\| - |\delta m|
      \ge
      \bar{\mu} - |\delta m|
      \ .
      \label{eq:pf:bound}
   \end{gather}
   In the last step we have also used the fact that $\bar{\mu}$ is the smallest
   eigenvalue of $|Q(\bar{m})|$. If we choose for $v$ any of the eigenvectors of
   $Q(m)$, the above inequality specializes to
   \begin{gather}
      | \lambda_n(m) | \ge \bar{\mu} - |\delta m|
      \ .
   \end{gather}
   If $|m-\bar{m}| = |\delta m| < \bar{\mu}$ then $| \lambda_n(m) | > 0$. Since
   $\lambda_n(m)$ is continuous in $m$, it is either always positive or always
   negative in the interval defined by $|m-\bar{m}| < \bar{\mu}$.
   
\end{proof}

This result allows us to design the following iterative algorithm which
restricts the original interval $I = [m, M]$, in which we search for sign flips,
to the interval $I' = [m',M]$ constructed in the following way:
\begin{enumerate}
   
   \item Set $m_0 = m$ and $n=0$.
   
   \item Define $\mu_n$ to be a safe non-negative lower bound for $\mu(m_n)$.
   The lowest eigenvalue of $|Q(m_n)|$ can be calculated with a standard power
   method applied to $Q(m_n)^{-2}$.
   
   \item Since no sign flip occurs in $[m_n,m_n+\bar{\mu}_n)$, we set $m_{n+1} =
   m_n + (1-\epsilon) \bar{\mu}_n$. $\epsilon$ is a tunable small positive
   parameter which we choose to be 0.1.
   
   \item If $m_{n+1}-m_n < \eta m_n$ or $m_{n+1} > M$, we set $m' = m_{n+1}$ and
   we stop the algorithm, otherwise we repeat from 2 with $n \leftarrow n+1$.
   $\eta$ is a tunable parameter satisfying $0 < \eta < 1$, which we choose to
   be 1/4.
   
\end{enumerate}
For most configurations, $\mu_n$ is an increasing sequence. In this situation
this part of the algorithm covers the whole interval in a handful of steps, as
illustrated in figure~\ref{fig:pf:scan}. If the interval $I' = [m',M]$ is not
empty, we restrict it to a new interval $I'' = [m',M']$ constructed in the
following way:
\begin{enumerate}
   
   \item Set $M_0 = M$ and $n=0$.
   
   \item Define $\mu_n$ to be a safe non-negative lower bound for $\mu(M_n)$.
   
   \item Since no sign flip occurs in $(M_n-\bar{\mu}_n,M_n]$, we set $M_{n+1} =
   M_n - (1-\epsilon) \bar{\mu}_n$.
   
   \item If $M_n-M_{m+1} < \eta M_n$ or $M_{n+1} < m'$, we set $M' = M_{n+1}$
   and we stop the algorithm, otherwise we repeat from 2 with $n \leftarrow
   n+1$.
   
\end{enumerate}
If the interval $I'' = [m',M']$ is not empty, then one resorts to the methods
described in~\cite{Campos:1999du,Mohler:2020txx}. In this case, one tracks the
mass-dependence of eigenvalues and eigenvectors in the interval $I''$. Since a
fairly accurate determination of eigenvectors is required for a fairly fine scan
of the mass, this stage of the algorithm is intrinsically more expensive than
the previous one. However, in practice, it needs to be applied only to a
relatively small number of configurations and to small mass intervals.

\begin{figure}
   \begin{center}
      \includegraphics[width=.8\textwidth]{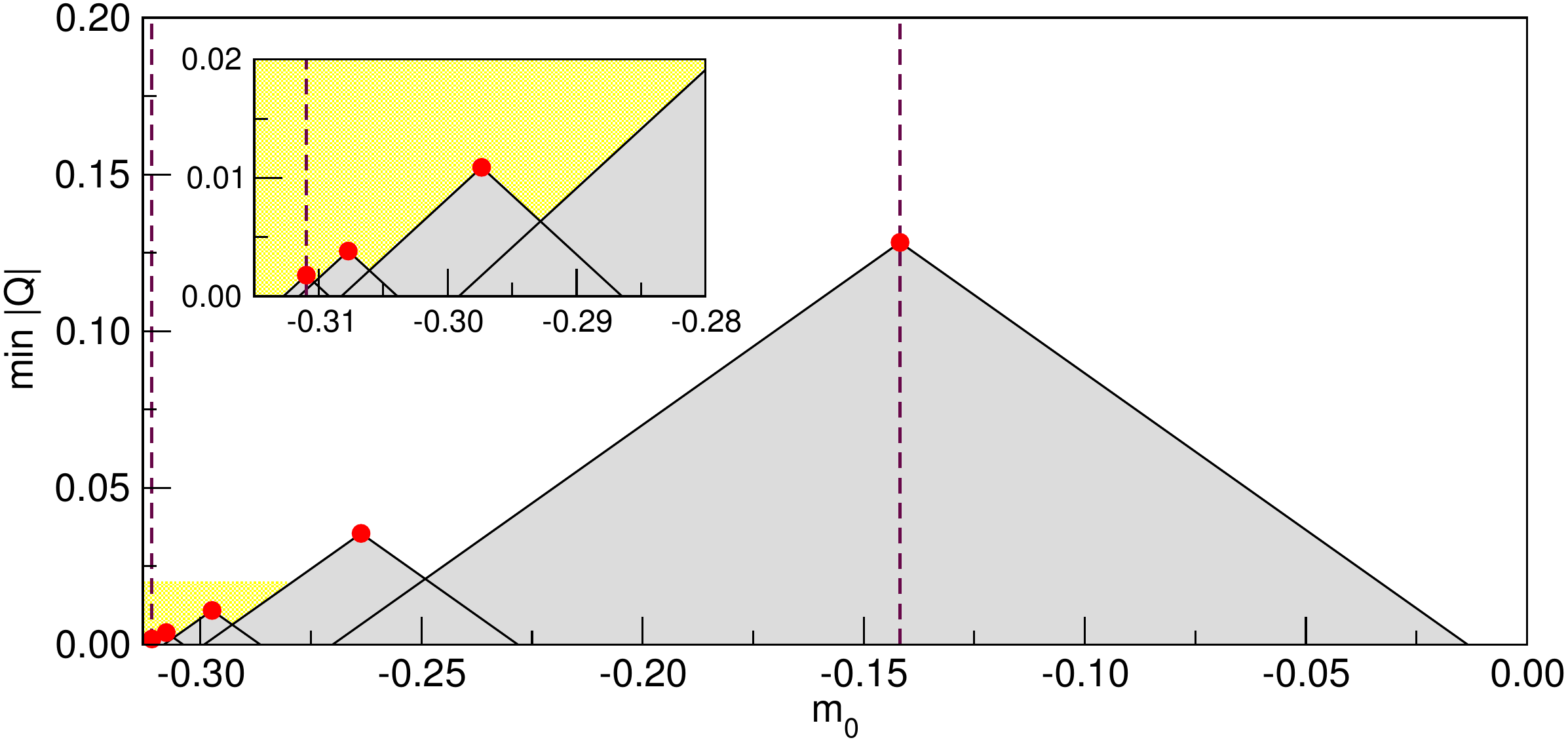}
      \caption{%
      Smallest eigenvalue of $|Q|$ for up-type quarks as a function of the
      valence mass $m_0$ (red points), calculated on a representative
      configuration (\texttt{C380a50b324} ensemble). Using the
      bound~\eqref{eq:pf:bound}, one proves that $|Q|$ has no eigenvalues in the
      grey areas. This implies that no eigenvalue of $Q$ crosses zero between
      the up-quark mass (left dashed vertical line) and the charm-quark mass
      (right dashed vertical line). One can see that we are able to flow the
      lowest eigenvalue across two orders of magnitude in only five steps. The
      inset is a zoom-in of the yellow area.
      }
      \label{fig:pf:scan}
   \end{center}
\end{figure}

\subsection{Algorithmic parameters}\label{sec:details:algo}

In this section we report on the algorithmic parameters chosen to generate the
presented gauge configurations, with a particular emphasis on integration
scheme, solvers and rational approximations.

\subsubsection*{Rational approximations}

Because of C$^\star$ boundary conditions, we need to simulate the following
fermionic determinant
\begin{gather}
   \det (\hat{D}_u^\dag \hat{D}_u)^{1/4} \det (\hat{D}_{ds}^\dag \hat{D}_{ds})^{1/2} \det (\hat{D}_c^\dag \hat{D}_c)^{1/4}
\end{gather}
for the QCD+QED ensembles ($\alpha \neq 0$, $m_d=m_s$) and the following
fermionic determinant
\begin{gather}
    \det (\hat{D}_{uds}^\dag \hat{D}_{uds})^{3/4} \det (\hat{D}_c^\dag \hat{D}_c)^{1/4}
\end{gather}
for the QCD ensembles ($\alpha = 0$, $m_u=m_d=m_s$). Here $\hat{D}$ denotes
the even-odd preconditioned Dirac operator. The inverse operators
$(\hat{D}^\dag \hat{D})^{-\gamma}$ with $\gamma=1/4,1/2,3/4$ are approximated
with rational functions of $\hat{D}^\dag \hat{D}$.

In practice, we construct the rational function $R(x)$ of order $(N,N)$ that
minimizes the relative precision, i.e. the maximum of $| 1 - x^{\gamma} R(x) |$
over some interval $x \in [r_a^2,r_b^2]$. The approximation range is chosen in
such a way that the eigenvalues of $\hat{D}^\dag \hat{D}$ are included in
$[r_a^2,r_b^2]$ most of the time. In table~\ref{tab:algo-param} we report the
parameters defining the rational approximations used for this work.

The error introduced by the rational approximation is corrected by means of a
reweighting factor. This strategy is identical to the one adopted in the
\texttt{openQCD} code~\cite{openQCD}. The generalization to any value of
$\gamma$ of the reweighting factor can be found in~\cite{Campos:2019kgw}. The
parameters of the rational functions have been chosen in such a way that the
reweighting factor does not introduce any detectable increase in the errors of
the considered observables.

When a rational approximation is used, the Dirac operator always appears in the
pseudofermion actions in the combination $\hat{D}^\dag \hat{D} +  \mu^2$ for
strictly positive values of $\mu$. The rational approximation has the effect of
removing the infinite potential barrier encountered when the fermionic
determinant or pfaffian tries to change sign~\cite{Mohler:2020txx}, pretty much
in the same way as the twisted-mass reweighting procedure proposed
in~\cite{Luscher:2012av}. However, if the ratio $r_b/r_a$ becomes too large or
if the precision of the rational approximation is chosen to be too high, then
the smallest twisted mass $\mu$ becomes too small, and the potential barrier may
become large enough to jeopardize ergodicity and stability of the algorithm. For
this reason we have progressively reduced the precision of the rational
approximations used in our runs, settling for a precision of a few units of
$10^{-6}$ in our latest runs.

\subsubsection*{Pseudofermion actions and solvers}

For a given flavour, the approximated determinant $\det R^{-1}$ is represented
by means of a sum of pseudofermion actions in the following way. We start from
the rational function
\begin{gather}
   R = A \prod_{k=1}^N \frac{\hat{D}^\dag \hat{D} + \nu_k^2}{\hat{D}^\dag \hat{D} + \mu_k^2} \ .
\end{gather}
In the following we assume the ordering $\nu_1 < \nu_2 < \dots < \nu_N$ and
$\mu_1 < \mu_2 < \dots < \mu_N$. We introduce a pseudofermion action for each
factor in the above equation with $k=1,\dots,N_2$, and a single pseudofermion
action for the remaining factors, i.e.
\begin{gather}
   S_k = \phi_k^\dag \frac{\hat{D}^\dag \hat{D} + \nu_k^2}{\hat{D}^\dag \hat{D} + \mu_k^2} \phi_k
   \ , \qquad \text{for } k=1,\dots,N_2
   \ , \\
   S_{N_2+1} = \phi_{N_2+1}^\dag \left( \prod_{k=N_2+1}^N \frac{\hat{D}^\dag \hat{D} + \nu_k^2}{\hat{D}^\dag \hat{D} + \mu_k^2} \right) \phi_{N_2+1}
   \ .
\end{gather}
In practice, the pseudofermion actions are represented by means of partial
fraction decompositions as explained in~\cite{Campos:2017fly}. The chosen values
for $N_2$ are reported in table~\ref{tab:algo-param}.

One needs to invert the operator $(\hat{D} + i \mu_k)$ in the calculation of the
pseudofermion actions, and the operator $(\hat{D} + i \nu_k)$ in the generation
of the pseudofermion fields. The multishift conjugate gradient is used for the
pseudofermion action $S_{N_2+1}$, while a deflated generalized conjugate
residual method~\cite{Luscher:2007se}, preconditioned with the Schwarz
alternating procedure~\cite{Luscher:2003vf}, is used for all other pseudofermion
actions. We use a $10^{-8}$ residue for all solvers used in the calculation of
the force, and a $10^{-10}$ residue for all solvers used in the calculation of
the action and in the generation of the pseudofermion fields.

A distinctive feature of the QCD+QED simulations is the need for different
deflation subspaces for different values of the electric charge ($\hat{q}=2,4$
in our simulations). Using the notation of the \texttt{openQ*D} input
files~\cite{Campos:2017fly}, we used $\texttt{mu} = 0.001$ for the $64 \times
32^3$ lattices and $\texttt{mu} = 0.005$ for the other ones, $\texttt{nkr} \ge
24$, and a total of $\texttt{Ns} = 20$ deflation vectors. The size of the
deflation blocks have been chosen to be as large as possibile, compatibly with
the size of the local lattice and the constraints of the simulation code.

\subsubsection*{HMC parameters and integration of Molecular Dynamics}

The sum of the gauge and pseudofermion actions is simulated with the Hybrid
Monte Carlo (HMC) algorithm with Fourier acceleration for the U(1)
field~\cite{Batrouni:1985jn,Duane:1988vr}. The Molecular Dynamics (MD) equations
are solved by means of a symplectic multilevel integrator~\cite{Sexton:1992nu}.
We use a MD trajectory length $\tau=2$ and a three-level scheme for all our
simulations. For each level one needs to specify: the number of integration
steps, the forces to be integrated and the type of integrator.

The fermionic forces corresponding to the pseudofermion actions $S_k$ with
$k=1,\dots,N_1$ (with $N_1 \le N_2$) are integrated in the outermost level with
the Omelyan--Mryglod--Folk (OMF)~\cite{OMELYAN2003272} second-order integrator.
All other fermionic forces are integrated in the intermediate level with the OMF
fourth-order integrator. Finally, the gauge forces are integrated in the
innermost level with the OMF fourth-order integrator. The chosen values for
$N_1$ and also the number of steps for each integration level are reported in
table~\ref{tab:algo-param}.

\begin{table}
   \small
   \begin{center}
      \begin{tabular}{lccccrrr}
         \hline
         ensemble & int. steps & flavours & $\gamma$ & $[r_a, r_b]$ & \multicolumn{1}{c}{$N$ [prec.]} & \multicolumn{1}{c}{$N_1$} & \multicolumn{1}{c}{$N_2$} \\
         \hline
         \hline
         
         \texttt{A400a00b324} & (1,1,8)  & uds & 3/4 & [0.00132,8.0] & 18 [4.02e-08] & 6 & 10 \\
                              &          & c   & 1/4 & [0.25500,8.0] &  8 [7.97e-08] & 0 & 0  \\
         \hline

         \texttt{B400a00b324} & (1,1,16) & uds & 3/4 & [0.00132,8.0] & 18 [4.02e-08] & 6 & 10 \\
                              &          & c   & 1/4 & [0.25500,8.0] &  8 [7.97e-08] & 0 & 0  \\
         \hline
         \hline
         
         \texttt{A450a07b324} & (1,1,12) & u   & 1/4 & [0.0004,10.0] & 15 [4.73e-06] & 5 & 8 \\
                              &          & ds  & 1/2 & [0.0010,10.0] & 14 [5.46e-06] & 4 & 7 \\
                              &          & c   & 1/4 & [0.2000,10.0] &  7 [2.36e-06] & 0 & 0 \\ 
         \hline

         \texttt{A380a07b324} & (1,1,12) & u   & 1/4 & [0.0020,10.0] & 13 [4.00e-06] & 3 & 6 \\
                              &          & ds  & 1/2 & [0.0010,10.0] & 13 [1.38e-05] & 3 & 6 \\
                              &          & c   & 1/4 & [0.2000,10.0] &  7 [2.36e-06] & 0 & 0\\
         \hline
         \hline
         
         \texttt{A500a50b324} & (1,1,12) & u   & 1/4 & [0.00070,9.0] & 15 [2.09e-06] & 5 & 8 \\
                              &          & ds  & 1/2 & [0.00132,9.0] & 15 [1.25e-06] & 5 & 8 \\
                              &          & c   & 1/4 & [0.20000,8.0] &  9 [1.84e-08] & 0 & 0 \\
         \hline

         \texttt{A360a50b324} & (1,1,12) & u   & 1/4 & [0.0004,10.0] & 15 [4.73e-06] & 5 & 8 \\
                              &          & ds  & 1/2 & [0.0010,10.0] & 14 [5.46e-06] & 4 & 7 \\
                              &          & c   & 1/4 & [0.2000,10.0] &  7 [2.36e-06] & 0 & 0 \\
         \hline

         \texttt{C380a50b324} & (1,1,18) & u   & 1/4 & [0.0004,10.0] & 15 [4.73e-06] & 5 & 8 \\
                              &          & ds  & 1/2 & [0.0010,10.0] & 14 [5.46e-06] & 4 & 8 \\
                              &          & c   & 1/4 & [0.2000,10.0] &  7 [2.36e-06] & 0 & 0 \\

         \hline
      \end{tabular}
   \end{center}
   \caption{%
   Parameters defining the integration scheme and the pseudofermion actions. In
   the second column (\textit{int. steps}) we report the number of integration
   steps for the innermost, intermediate and outermost integration levels. For
   each degenerate multiplet of quarks specified in the third column
   (\textit{flavours}), we use a rational approximation for the operator
   $(\hat{D}^\dag \hat{D})^{-\gamma}$. We report the exponent $\gamma$, the
   chosen range $[r_a, r_b]$, the order $(N,N)$ of the rational approximation,
   and its relative uniform precision. $N_1$ denotes the number of factors
   integrated in the outermost level of the integrator, and $N_2$ denotes the
   number of factors that have been split into independent pseudofermion
   actions.
   }\label{tab:algo-param}
\end{table}

\section{Outlook}

We have presented seven (two QCD and five QCD+QED) gauge ensembles with
C$^\star$ boundary conditions. Three different values of the renormalized
fine-structure constant ($\alpha_R \simeq 0, 1/137, 0.04$) and two different
volumes ($L \simeq 1.6 \text{ fm}, 2.4 \text{ fm}$) have been considered. In all
cases we have simulated close to the SU(3)-symmetric point: in particular the
simulated up and down quarks are heavier than the physical ones, and the
simulated strange quark is lighter than the physical one. We have calculated a
number of observables: pseudoscalar meson masses, octet baryon masses, and the
$\Omega^-$ mass, the flow scale $t_0$ and the renormalized fine-structure
constant in the gradient-flow scheme. While this is only the first step of a
long-term research project, we comment here only on our goal for the near
future.

Baryon masses are unsurprisingly very noisy. Since the $\Omega^-$ is needed to
set the scale in our simulations, we plan to implement state-of-the-art
noise-reduction techniques and to invest significant resources in order to bring
the error down. In baryon correlators, we have also neglected the
quark-disconnected contractions, which are peculiar of C$^\star$ boundary
conditions. This is equivalent to calculating the masses of some
partially-quenched baryons which become degenerate with the physical baryons in
the infinite-volume limit. We plan to investigate the impact of the disconnected
contributions in detail.

The sheer number of parameters in QED+QCD simulations makes the tuning
particularly expensive. We have presented a tuning strategy, based on a number
of tricks which include mass-reweighting and linear interpolations and
extrapolations in parameters space. It turns out that on the smaller volumes,
there is no point in pushing the precision of the tuning too much, since
finite-volume effects are still significant. Volumes larger than the ones
presented here will need to be simulated in order to gain better control on
finite-volume effects. However, a first analysis already indicates that
finite-volume effects are smaller than the statistical errors on the larger
volume presented in this paper.

We also want to move towards physical quark masses, by making the up and down
quarks lighter, and the strange quark heavier. Besides the obvious
phenomenological motivation, it will be interesting to see how the cost of
simulations changes when the up quark gets lighter.

\begin{acknowledgement}%

We are grateful to Anian Altherr, Roman Gruber, Javad Komijani, Sofie Martins,
Paola Tavella for detailed feedback on the paper and useful conversations.
Alessandro Cotellucci and Jens L\"ucke's research is funded by the Deutsche
Forschungsgemeinschaft (DFG, German Research Foundation) - Projektnummer
417533893/GRK2575 ``Rethinking Quantum Field Theory''. The funding from the
European Union’s Horizon 2020 research and innovation program under grant
agreement No. 813942 and the financial support by SNSF (Project No.
200021\_200866) is gratefully acknowledged.
The authors gratefully acknowledge the computing time granted by the Resource
Allocation Board and provided on the supercomputer Lise and Emmy at NHR@ZIB and
NHR@Göttingen as part of the NHR infrastructure. The calculations for this
research were partly conducted with computing resources under the project
bep00085 and bep00102. The work was supported by CINECA that granted computing
resources on the Marconi supercomputer to the LQCD123 INFN theoretical
initiative under the CINECA-INFN agreement. The authors acknowledge access to
Piz Daint at the Swiss National Supercomputing Centre, Switzerland under the
ETHZ's share with the project IDs go22, go24, eth8, and s1101. The work was
supported by the Poznan Supercomputing and Networking Center (PSNC) through
grant numbers 450 and 466.
\end{acknowledgement}

\appendix
\section{Properties of the pfaffian}\label{app:pfaffian}

In this appendix we review the main properties of the Dirac operator and its
pfaffian. We use here $a=1$. The Dirac operator acts on the quark-antiquark
doublet
\begin{gather}
   \chi = 
   \begin{pmatrix} \psi\\
   \psi^\mathcal{C}
   \end{pmatrix} = 
   \begin{pmatrix} \psi\\
   C^{-1}\bar{\psi}^t
   \end{pmatrix}
   \ ,
\end{gather}
and can be written as a sum of terms
\begin{gather}
   D = m + D_{\mathrm{w}} + \delta D_{\mathrm{sw}}
   \ .
\end{gather}
The Wilson--Dirac operator has the standard form
\begin{gather}
   D_{\mathrm{w}} = \frac{1}{2} \sum_\mu \left\{ \gamma_\mu ( \nabla_\mu - \nabla_\mu^\dag ) - \nabla_\mu^\dag \nabla_\mu \right\}
   \ ,
\end{gather}
but the forward covariant derivative $\nabla_\mu$ is constructed keeping in mind
that the quark and antiquark fields transform under different representations of
the gauge group, one being the complex conjugate of the other, i.e.
\begin{gather}
   \nabla_\mu \chi(x)
   =
   \begin{pmatrix}
      z^{\hat{q}}(x,\mu) U(x,\mu) & 0 \\
      0 & z^{-\hat{q}}(x,\mu) U^*(x,\mu)
   \end{pmatrix}
   \chi(x + \hat{\mu}) - \chi(x)
   \ ,
\end{gather}
where $U(x,\mu)$ and $z(x,\mu)$ are the SU(3) and U(1) link variables, and
$\hat{q}$ is the electric charge of the quark in units of $q_{el}$ which appears
in eq.~\eqref{eq:U(1)_gauge_action}. With our choice $q_{el} = 1/6$, up-type
quarks have $\hat{q}=4$ while down-type quarks have $\hat{q}=-2$. In finite
volume, the definition of the forward derivative is supplemented with the
boundary conditions
\begin{gather}
   \chi(x + \tfrac{T}{a} \hat{0}) = \chi(x)
   \ , \qquad
   \chi(x + \tfrac{L}{a} \hat{k}) = K \chi(x)
   \ ,
\end{gather}
for $k=1,2,3$. The matrix
\begin{gather}
   K
   =
   \begin{pmatrix}
      0 & 1 \\
      1 & 0
   \end{pmatrix}
\end{gather}
exchanges quark and antiquark, and implements C$^\star$ boundary conditions in
this formalism. The Sheikholeslami--Wohlert term also takes into account the fact
that quark and antiquark fields transform in different representations of the
gauge group, and is given explicitly by
\begin{gather}
   \delta D_{\mathrm{sw}} =
   - \frac{1}{4} \sum_{\mu,\nu} \sigma_{\mu\nu}
   \left\{
   c_{\mathrm{sw}}^{\mathrm{SU(3)}}
   \begin{pmatrix}
      \hat{G}_{\mu\nu} & 0 \\
      0 & -\hat{G}_{\mu\nu}^*
   \end{pmatrix}
   + \hat{q} \, c_{\mathrm{sw}}^{\mathrm{U(1)}}
   \begin{pmatrix}
      \hat{F}_{\mu\nu} & 0 \\
      0 & -\hat{F}_{\mu\nu}^*
   \end{pmatrix}
   \right\}
   \ .
\end{gather}
$\hat{G}_{\mu\nu}$ and $\hat{F}_{\mu\nu}$ are the clover discretizations of the
hermitian SU(3) and U(1) field tensors, and $\sigma_{\mu\nu} = \frac{i}{2}
[\gamma_\mu,\gamma_\nu]$.

\bigskip

\begin{proposition} \label{prop:antisymmetry}
   
   The matrix $CKD$ is antisymmetric.
   
\end{proposition}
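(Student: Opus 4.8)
The plan is to exploit the block structure of $D$ in the quark--antiquark doublet space. Inspecting the explicit formulae for $D_{\mathrm w}$ and $\delta D_{\mathrm{sw}}$ shows that $D$ is block diagonal, $D=\mathrm{diag}(D_q,D_{\bar q})$, where $D_q$ is built from the connection $z^{\hat q}U$ and $D_{\bar q}$ from the conjugate connection $z^{-\hat q}U^{*}$, with the stated sign flips in the clover term. Since $C$ acts only on spinor indices it is $\mathrm{diag}(C,C)$ in this decomposition, while $K$ merely swaps the two blocks, so that
\begin{gather}
   CKD=\begin{pmatrix} 0 & C D_{\bar q} \\ C D_q & 0 \end{pmatrix}
   \ .
\end{gather}
Transposing block by block and using $K^{t}=K$ and $C^{t}=-C$, the antisymmetry $(CKD)^{t}=-CKD$ reduces to the single charge-conjugation identity $C^{-1}D_q^{t}C=D_{\bar q}$: this is precisely the upper-right block condition, and taking the transpose of this relation (again with $C^{t}=-C$) yields the lower-left block condition automatically. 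So the whole statement hinges on that one identity, which I would verify term by term.

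First I would treat the Wilson piece. Writing $\nabla_\mu^{(q)}=V_\mu T_\mu-1$, with $T_\mu$ the forward shift and $V_\mu$ multiplication by the link $z^{\hat q}(x,\mu)U(x,\mu)$, a direct index computation in the lattice-site and colour indices gives $(\nabla_\mu^{(q)})^{t}=(\nabla_\mu^{(\bar q)})^{\dagger}$: transposition turns $T_\mu$ into the backward shift and sends $V_\mu(x)$ into $V_\mu(x-\hat\mu)^{t}$, which equals the antiquark backward connection $\bigl(V_\mu^{*}(x-\hat\mu)\bigr)^{\dagger}$ once one recalls $V_\mu^{*}=z^{-\hat q}U^{*}$ and $(V^{*})^{\dagger}=V^{t}$. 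Combining this with the spinor identity $C^{-1}\gamma_\mu^{t}C=-\gamma_\mu$ and the fact that the $\gamma_\mu$ commute with the derivatives, one checks termwise that the minus sign from $C^{-1}\gamma_\mu^{t}C$ converts $(\nabla_\mu^{(\bar q)\dagger}-\nabla_\mu^{(\bar q)})$ into $(\nabla_\mu^{(\bar q)}-\nabla_\mu^{(\bar q)\dagger})$, giving $C^{-1}(D_{\mathrm w}^{(q)})^{t}C=D_{\mathrm w}^{(\bar q)}$.

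For the clover term I would use $C^{-1}\sigma_{\mu\nu}^{t}C=-\sigma_{\mu\nu}$, which follows from $C^{-1}\gamma_\mu^{t}C=-\gamma_\mu$ and $\sigma_{\mu\nu}=\tfrac{i}{2}[\gamma_\mu,\gamma_\nu]$, together with the hermiticity of the field tensors, $\hat G_{\mu\nu}^{t}=\hat G_{\mu\nu}^{*}$ and $\hat F_{\mu\nu}^{t}=\hat F_{\mu\nu}^{*}$. Transposing $\delta D_{\mathrm{sw}}^{(q)}$ then produces $\sigma_{\mu\nu}^{t}$ multiplying the conjugated tensors, and conjugation by $C$ supplies the sign flip that turns the upper-block content $(\hat G_{\mu\nu},\hat F_{\mu\nu})$ into the lower-block content $(-\hat G_{\mu\nu}^{*},-\hat F_{\mu\nu}^{*})$ of $\delta D_{\mathrm{sw}}^{(\bar q)}$. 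The mass term is immediate since $C^{-1}m^{t}C=m$. Assembling the three contributions gives $C^{-1}D_q^{t}C=D_{\bar q}$ and hence the proposition.

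I expect the main obstacle to be the transpose-of-the-derivative step: one must transpose simultaneously in the lattice-site and colour indices, track the resulting shift of the link by one site, and correctly identify the outcome with the \emph{adjoint} antiquark derivative rather than the plain one. Once $(\nabla_\mu^{(q)})^{t}=(\nabla_\mu^{(\bar q)})^{\dagger}$ is established, the remainder is routine bookkeeping with the standard $C$-conjugation identities for $\gamma_\mu$ and $\sigma_{\mu\nu}$.
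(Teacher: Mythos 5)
Your proof fails at its very first step: the claim that $D$ is block diagonal in the quark--antiquark doublet space, $D=\mathrm{diag}(D_q,D_{\bar q})$, is false, and this is not a technicality --- it is precisely the point of C$^\star$ boundary conditions. The \emph{bulk} hopping terms are indeed block diagonal, with the quark block built from $z^{\hat q}U$ and the antiquark block from $z^{-\hat q}U^*$. But the covariant derivatives are supplemented by the boundary conditions $\chi(x+L\hat k)=K\chi(x)$, so every hopping term that crosses a spatial boundary carries a factor of $K$, i.e.\ it couples $\psi$ to $\psi^{\mathcal C}$ and sits in the \emph{off-diagonal} blocks of $D$: for $x$ on the boundary one has $\nabla_k\chi(x)=\mathcal U(x,k)\,K\,\chi(x+a\hat k-L\hat k)-\chi(x)$ with $\mathcal U=\mathrm{diag}(z^{\hat q}U,z^{-\hat q}U^*)$, and $\mathcal U K$ is block off-diagonal. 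This is exactly what the paper means when it says that with C$^\star$ boundary conditions the Dirac operator acts on the doublet in a non-diagonal way; if $D$ really were block diagonal, $CKD$ would be block anti-diagonal and $\pf(CKD)$ would collapse to $\pm\det(CD_{\bar q})$, an ordinary determinant, and the entire doublet/pfaffian formalism would be superfluous.

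As a consequence, your reduction of the proposition to the single identity $C^{-1}D_q^tC=D_{\bar q}$ only establishes antisymmetry of part of $CKD$. Split $D=D_{\mathrm{bulk}}+D_{\mathrm{bdry}}$: since $K$ swaps the blocks, $CKD_{\mathrm{bulk}}$ is block anti-diagonal (your identity handles this piece, and that part of your computation is essentially correct), while $CKD_{\mathrm{bdry}}$ is block \emph{diagonal}, and its antisymmetry is a separate condition --- each diagonal block must be antisymmetric on its own. Verifying that requires the interplay of $K^t=K$, the conjugation property $K\,\mathcal U\,K=\mathcal U^*$, and the transposition properties of $C\gamma_\mu$ acting between a boundary site and its wrapped image; none of this appears in your argument, and it is the genuinely C$^\star$-specific content of the proposition. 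The paper avoids the bulk/boundary split altogether: it proves the conjugation identity $\gamma_5\,CKD\,KC\,\gamma_5=D^*$ for the \emph{full} operator (boundary insertions included, using the $K$- and $C$-conjugation identities for links, field tensors, $\gamma_\mu$ and $\sigma_{\mu\nu}$), and then obtains $(CKD)^t=-CKD$ in one line from $\gamma_5$-hermiticity $\gamma_5 D^*\gamma_5=D^t$ together with $C^t=-C$, $K^t=K$ and $[C,K]=0$. To salvage your block-by-block strategy you would have to add the missing boundary analysis, which is not shorter than the paper's uniform argument.
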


\begin{proof}
   For definiteness, we choose to work in the chiral basis for the Euclidean
   gamma matrices ($\gamma_5$ is diagonal, $\gamma_{0,2}$ are real,
   $\gamma_{1,3}$ are imaginary), and we define the charge conjugation matrix as
   \begin{gather}
      C = i \gamma_0 \gamma_2 \ .
   \end{gather}
   Notice that $C$ is imaginary and antisymmetric, and satisfies $C^2=1$. Using
   the gamma matrix identities
   \begin{gather}
      \gamma_5 C \gamma_\mu C \gamma_5 = \gamma_\mu^*
      \ , \\
      \gamma_5 C \sigma_{\mu\nu} C \gamma_5 = - \sigma_{\mu\nu}^*
      \ ,
   \end{gather}
   and the following identities involving the $K$ matrix
   \begin{gather}
      K 
      \begin{pmatrix}
         z^{\hat{q}} U & 0 \\
         0 & z^{-\hat{q}} U^*
      \end{pmatrix}
      K
      =
      \begin{pmatrix}
         z^{\hat{q}} U & 0 \\
         0 & z^{-\hat{q}} U^*
      \end{pmatrix}^*
      \ , \\
      K 
      \begin{pmatrix}
         \hat{G}_{\mu\nu} & 0 \\
         0 & -\hat{G}_{\mu\nu}^*
      \end{pmatrix}
      K
      =
      - \begin{pmatrix}
         \hat{G}_{\mu\nu} & 0 \\
         0 & -\hat{G}_{\mu\nu}^*
      \end{pmatrix}^*
      \ , \\
      K 
      \begin{pmatrix}
         \hat{F}_{\mu\nu} & 0 \\
         0 & -\hat{F}_{\mu\nu}^*
      \end{pmatrix}
      K
      =
      - \begin{pmatrix}
         \hat{F}_{\mu\nu} & 0 \\
         0 & -\hat{F}_{\mu\nu}^*
      \end{pmatrix}^*
      \ ,
   \end{gather}
   one easily proves
   \begin{gather}
      \gamma_5 C K D K C \gamma_5 = D^* \ ,
   \end{gather}
   or, equivalently,
   \begin{gather}
      C K D = \gamma_5 D^* \gamma_5 C K = D^t C K = - ( C K D )^t \ .
   \end{gather}
   In the third equality we have used $\gamma_5$-hermiticity of the Dirac
   operator, and in the last step we have used $C^t = -C$ and $K^t = K$.
   
\end{proof}

\bigskip

\begin{proposition} \label{prop:degeneracy}
   
   The spectrum of the operator $Q = \gamma_5 D$ is doubly degenerate.
   
\end{proposition}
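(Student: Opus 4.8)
The plan is to exploit the fact that $Q=\gamma_5 D$ is Hermitian and to exhibit an antiunitary symmetry of $Q$ that squares to $-1$, so that the double degeneracy follows from Kramers' theorem. That $Q$ is Hermitian is immediate from $\gamma_5$-hermiticity, $D^\dagger = \gamma_5 D \gamma_5$, which gives $Q^\dagger = D^\dagger \gamma_5 = \gamma_5 D = Q$; in particular the spectrum is real. The entire content of the statement is therefore the pairing of eigenvectors, and the task reduces to producing the right antiunitary operator.

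The antiunitary operator will be built from the ingredients already appearing in the proof of Proposition~\ref{prop:antisymmetry}, which establishes $\gamma_5 C K D K C \gamma_5 = D^*$. Since $\gamma_5$, $C$ and $K$ mutually commute and each squares to the identity, this reads $D^* = V D V$ with $V \equiv \gamma_5 C K = V^{-1}$. I would then compute $Q^* = \gamma_5 D^* = \gamma_5 V D V$ and, using $\gamma_5 V = C K$, rewrite this as $Q^* = V Q V$. Defining the antiunitary operator $\Theta = V \, \mathcal{C}_0$, where $\mathcal{C}_0$ denotes entrywise complex conjugation, the identity $Q^* = V Q V$ is exactly the statement $\Theta Q = Q \Theta$. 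Because $\Theta$ is antilinear and the eigenvalues are real, $\Theta$ maps each eigenspace of $Q$ into itself.

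It then remains to evaluate $\Theta^2 = V V^*$. In the chiral basis chosen in Proposition~\ref{prop:antisymmetry} the matrices $\gamma_5$ and $K$ are real while $C = i\gamma_0\gamma_2$ is imaginary, so $V^* = \gamma_5 C^* K = -V$ and hence $\Theta^2 = -V^2 = -1$. The standard Kramers argument then closes the proof: for any eigenvector $v$ the relation $\Theta^2 = -1$ forces $\langle v , \Theta v\rangle = 0$, so $v$ and $\Theta v$ are orthogonal (hence independent) members of the same eigenspace, and iterating this pairing shows that every eigenspace has even dimension. This is precisely the asserted double degeneracy, consistent with the labelling $\lambda_{n=1,\dots,12V}$ of half the $24V$ eigenvalues used later.

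The only genuinely delicate point is the bookkeeping of reality properties in the last step: the conclusion $\Theta^2=-1$ (rather than $+1$, which would give no degeneracy at all) hinges on $C$ being imaginary, a basis-dependent fact, so I would emphasise that the computation is carried out in the same chiral basis fixed for Proposition~\ref{prop:antisymmetry}. Everything else is a routine reuse of the gamma-matrix and $K$-matrix identities established there, so I do not anticipate further obstacles.
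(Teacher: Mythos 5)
Your proof is correct and is essentially the paper's own argument: your antiunitary operator $\Theta = V\,\mathcal{C}_0$ is exactly the paper's map $v \mapsto U v^*$ with $U = CK\gamma_5$, your commutation relation $\Theta Q = Q \Theta$ is the paper's identity $UQ^* = QU$, and $\Theta^2 = -1$ is equivalent to the antisymmetry of $U$ that the paper uses to obtain the orthogonality $(Uv^*, v) = 0$. The only cosmetic difference is that you invoke Kramers' theorem by name, whereas the paper proves the even-dimensionality of each eigenspace inline via a modified Gram--Schmidt iteration (and, incidentally, your computation $V^* = -V$ is right, correcting a sign typo in the paper's stated property $U = U^*$).
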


\begin{proof}
   
   The matrix $U = C K \gamma_5$ has the following properties
   \begin{gather}
      U = U^\dag = U^{-1} = - U^t = U^*
      \ .
   \end{gather}
   Proposition~\ref{prop:antisymmetry} implies
   \begin{gather}
      U Q^* = C K D^* = - (C K D)^* = (C K D)^\dag = \gamma_5 D \gamma_5 C K = Q U \ .
   \end{gather}
   As a simple application of the above relations, it follows that:
   \begin{enumerate}
      
      \item if $v$ is an eigenvector of $Q$ with eigenvalue $\lambda$, then $U v^*$ is also an eigenvector of $Q$ with the same eigenvalue:
      \begin{gather}
         Q U v^* = U Q^* v^* = U (\lambda v)^* = \lambda U v^* \ ,
      \end{gather}
      where the last equality follows from the fact that $\lambda$ is real;
      
      \item $v$ and $U v^*$ are orthogonal:
      \begin{gather}
         ( U v^* , v ) = v^t U^\dag v = v^t U v = 0 \ ,
      \end{gather}
      where the last equality follows from the fact that $U$ is antisymmetric;
      
      \item if $w$ is orthogonal to $v$ and $U v^*$, then $U w^*$ is also orthogonal to $v$ and $U v^*$:
      \begin{gather}
         ( Uw^* , v ) = w^t U v = - v^t U w = - ( v^*, Uw ) = - ( Uv^*, w ) = 0
         \ , \\
         ( U w^* , U v^* ) = w^t U^2 v^* = v^\dag w = (v,w) = 0
         \ .
      \end{gather}
      
   \end{enumerate}
   A straightforward modification of the Gram--Schmidt algorithm, in which one
   alternates an orthogonalization step with the construction of an eigenvector of
   the form $U v_i^*$, allows to prove that the degeneracy $d$ of any
   eigenvector is even, and an orthonormal basis for the corresponding eigenspace
   can be chosen of the form:
   \begin{gather}
      v_1, \ U v_1^*,\ v_2, \ U v_2^*, \ \dots \ , \ v_{d/2}, \ U v_{d/2}^* \ .
   \end{gather}
   
\end{proof}

\bigskip

\begin{proposition} \label{prop:pfaffian}
   
   Let $\lambda_{n=1,\dots,12V} \in \mathbb{R}$ be the list of eigenvalues of
   $Q$, each of them appearing a number of times equal to half their degeneracy.
   Then the following formula holds
   \begin{gather}
       \pf \, (CKD)
       =
       \prod_{n=1}^{12V} \lambda_n
       \ . \label{eq:app:pf-lambda}
   \end{gather}
   
\end{proposition}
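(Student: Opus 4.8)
The plan is to peel the pfaffian off the spectrum of $Q=\gamma_5 D$ through the factorization $CKD=UQ$, where $U=CK\gamma_5$ is the matrix from the proof of proposition~\ref{prop:degeneracy} (this uses only $\gamma_5^2=1$). By proposition~\ref{prop:antisymmetry} the matrix $CKD$ is antisymmetric, so for any matrix $B$ the transformation law $\pf\!\big(B^t(CKD)B\big)=\det(B)\,\pf(CKD)$ applies. I would take $B$ to be the unitary matrix whose columns are the paired orthonormal eigenvectors $v_1,Uv_1^*,v_2,Uv_2^*,\dots,v_{12V},Uv_{12V}^*$ of $Q$ constructed in proposition~\ref{prop:degeneracy}, writing $u_n:=Uv_n^*$, so that $Qv_n=\lambda_n v_n$ and $Qu_n=\lambda_n u_n$ for each reduced eigenvalue $\lambda_n$.

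First I would evaluate the entries of $B^t(CKD)B=B^tUQB$ using nothing but these eigenvalue relations together with the stated properties $U^\dagger=U$, $U^{-1}=U$ and $U^t=-U$ (the last two give $U^2=1$, and with Hermiticity they force $U^*=-U$). For an eigenvector $y$ one has $x^tUQy=\lambda\,x^tUy$, and Hermiticity gives the rewriting $x^tUy=(Ux^*,y)$. Combining these with $U^2=1$, the identity $Uu_m^*=-v_m$, the orthonormality of the basis, and the orthogonality $(v_n,u_n)=0$ from proposition~\ref{prop:degeneracy}, all $v$--$v$ and $u$--$u$ matrix elements vanish and the only surviving entries are $v_m^t(CKD)u_n=\lambda_n\delta_{mn}=-\,u_n^t(CKD)v_m$. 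Hence $B^t(CKD)B$ is block diagonal with blocks $\left(\begin{smallmatrix}0&\lambda_n\\-\lambda_n&0\end{smallmatrix}\right)$, whose pfaffian is $\prod_{n=1}^{12V}\lambda_n$, and therefore $\pf(CKD)=\det(B)^{-1}\prod_{n}\lambda_n$.

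Because $B$ is unitary, $\det B$ is a unit-modulus phase, and I would next show it is real. Since $\det(CK)=\det C\,\det K=1$ and $\det\gamma_5=1$, one has $\pf(CKD)^2=\det(CKD)=\det(CK)\det D=\det D=\det(\gamma_5 Q)=\det Q=\big(\prod_n\lambda_n\big)^2$, the last equality because each $\lambda_n$ occurs twice in the full spectrum of $Q$. Thus $\pf(CKD)=\pm\prod_n\lambda_n$ is real and $\det B=\pm1$, so only the overall sign remains.

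The hard part is pinning this residual sign, which I would do by continuity in the bare mass $m$. With $D=m+D_{\mathrm{w}}+\delta D_{\mathrm{sw}}$ the entries of $CKD(m)$ are affine in $m$, so $\pf(CKD(m))$ is a polynomial in $m$; the $\lambda_n(m)$ can be labelled continuously (as in section~\ref{sec:details:sign}), so $s(m):=\pf(CKD(m))\big/\prod_n\lambda_n(m)\in\{\pm1\}$. Both numerator and denominator vanish precisely when an eigenvalue pair crosses zero, and at a transversal crossing each has a simple zero, so $s$ extends continuously across the crossing and cannot jump; hence $s$ is globally constant for fixed gauge field. I would fix its value in the large-mass limit, where $D(m)\to m$ independently of the gauge field: then $CKD\to m\,CK$ gives $\pf(CKD)\to m^{12V}\pf(CK)$, while $Q\to m\gamma_5$ has reduced spectrum equal to $6V$ copies of $+m$ and $6V$ of $-m$, so $\prod_n\lambda_n\to m^{12V}(-1)^{6V}=m^{12V}$ (as $6V$ is even). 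This collapses everything to the single sign $\pf(CK)$; since $\pf(CK)^2=\det(CK)=1$, a direct evaluation of $CK$ in the standard basis fixes $\pf(CK)=+1$, giving $s\equiv1$ and $\pf(CKD)=\prod_n\lambda_n$. I expect this final sign bookkeeping -- equivalently, showing $\det B=+1$ rather than $-1$ -- to be the only genuinely delicate point, the block-diagonalization itself being essentially forced by propositions~\ref{prop:antisymmetry} and~\ref{prop:degeneracy}.
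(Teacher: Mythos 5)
Your overall skeleton is in fact the same as the paper's: establish $\pf(CKD)^2=\bigl(\prod_n\lambda_n\bigr)^2$, then fix the residual sign by flowing the bare mass to $+\infty$, where $\pf[CKD(m)]\to m^{12V}\pf(CK\otimes I)$ and the reduced spectrum consists of $6V$ copies each of $\pm m$. Your first paragraph (block-diagonalizing $CKD=UQ$ in the paired eigenbasis) is correct as linear algebra --- and you are right that $U^*=-U$ rather than $U^*=U$ as written in the paper's property list, since Hermiticity together with antisymmetry forces $U$ to be purely imaginary --- but it is ultimately a detour: it determines $\pf(CKD)$ only up to $\det B$, and you then pin $\det B=\pm1$ using the very identity $\pf^2=\det$ that already yields $\pf(CKD)=\pm\prod_n\lambda_n$ by itself. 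The paper dispenses with this step and works directly with $f(m)=\pf[CKD(m)]$ and $g(m)=\prod_n\lambda_n(m)$.

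The genuine gap is in your sign-pinning step. You invoke only a \emph{continuous} labelling of the $\lambda_n(m)$ and argue that $s(m)=f(m)/g(m)$ cannot jump because ``at a transversal crossing each has a simple zero.'' This tacitly assumes that every zero of $g$ is an isolated, transversal crossing of a single reduced eigenvalue and that $g$ is differentiable there. Neither is guaranteed for an arbitrary gauge configuration: an eigenvalue pair may touch zero and turn back, several reduced eigenvalues may vanish at the same $m$, and with a merely continuous labelling you have no control of $f/g$ at such points (e.g. $f(x)=x$, $g(x)=|x|$ satisfy $f^2=g^2$ with $g$ continuous, yet the ratio jumps). The paper closes exactly this hole by citing Kato~\cite{Kato:101545}: since $Q(m)$ is a Hermitian family depending linearly, hence analytically, on $m$, its eigenvalues can be labelled so as to be \emph{analytic} in $m$, making $g$ analytic while $f$ is a polynomial; then $f^2=g^2$ plus positivity of both for large $m$ gives $f=g$ on a half-line, and analytic continuation gives $f\equiv g$ everywhere, with no genericity assumption about crossings. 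Your argument becomes rigorous once you upgrade to this analytic labelling (at a common zero of any finite order, $f$ and $g$ then vanish to equal order and $s$ still cannot jump), or simply replace it by the continuation argument. Note finally that both you and the paper assert $\pf(CK\otimes I)=+1$ by direct evaluation; that is legitimate, but it is a shared input, not something your block-diagonalization delivers.
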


\begin{proof}
   
   The pfaffian of a matrix is a polynomial in the entries of the matrix. Since
   $D$ depends linearly on the bare mass $m$, then the function
   \begin{gather}
      f(m) = \pf \, [CKD(m)]
   \end{gather}
   is a polynomial in $m$. Since $Q$ is hermitian and depends linearly on $m$,
   for every $m$ its eigenvalues $\lambda_n$ can be labeled in such a way that
   they are analytic functions of $m$ even at level
   crossings~\cite{Kato:101545}. Then the function
   \begin{gather}
      g(m) = \prod_{n=1}^{12V} \lambda_n(m)
   \end{gather}
   is analytical for every real value of $m$.
   
   For $m \to +\infty$ one has $D \simeq m I_{24V}$, which implies
   \begin{gather}
      \lim_{m \to +\infty} f(m) = \lim_{m \to +\infty} m^{12V} \pf( C \otimes K \otimes I_{3V} ) = \lim_{m \to +\infty} m^{12V} = +\infty
      \ .
   \end{gather}
   For $m \to +\infty$ one has $Q \simeq m I_{6V} \otimes \gamma_5$, which
   implies that half of the eigenvalues are asymptotically equal to $+m$ and the
   other half are asymptotically equal to $-m$. Therefore
   \begin{gather}
      \lim_{m \to +\infty} g(m) = \lim_{m \to +\infty} \prod_{n=1}^{12V} \lambda_n = \lim_{m \to +\infty} m^{6V} (-m)^{6V} = +\infty
      \ .
   \end{gather}
   From the two limits it follows that a value $M$ exists such that both $f(m)$
   and $g(m)$ are positive for every $m > M$.
   
   Using properties of the pfaffian and the determinant one easily shows that,
   for every $m$,
   \begin{gather}
      f(m)^2
      =
      \pf \, [CKD(m)]^2
      =
      \det \left[ D(m) \right]
      =
      \det \left[ Q(m) \right]
      =
      \prod_{n=1}^{12V} \lambda_n^2(m)
      =
      g(m)^2
      \ .
   \end{gather}
   For $m>M$ both functions are positive, therefore the above equality implies
   $f(m) = g(m)$.
   
   Since $f(m)$ and $g(m)$ are analytic function of $m$, and they are equal for
   every $m > M$, it follows that they are equal everywhere.      
   
\end{proof}

\clearpage
\small
\addcontentsline{toc}{section}{References}
\bibliographystyle{JHEP}
\bibliography{non-inspire,inspire}

\end{document}